\newtheorem{theorem}{Theorem}[section]
\newtheorem{lemma}[theorem]{Lemma}
\newtheorem{proposition}[theorem]{Proposition}
\theoremstyle{remark}
\newtheorem{remark}[theorem]{Remark}
\theoremstyle{definition}
\newtheorem{definition}[theorem]{Definition}
\theoremstyle{example}
\newtheorem{example}[theorem]{Example}
\theoremstyle{notation}
\newcommand{\bra}[1]{\langle#1|}
\newcommand{\ket}[1]{|#1\rangle}
\begin{document}

\title{M\"obius operators and non-additive quantum probabilities in the Birkhoff-von Neumann lattice}            
\author{A. Vourdas}
\affiliation{Department of Computer Science,\\
University of Bradford, \\
Bradford BD7 1DP, United Kingdom\\a.vourdas@bradford.ac.uk}

\begin{abstract}
The properties of quantum probabilities are linked to the geometry of quantum mechanics, described by the Birkhoff-von Neumann lattice.
Quantum probabilities violate the additivity property of Kolmogorov probabilities, and they are interpreted as Dempster-Shafer probabilities.
Deviations from the additivity property are quantified with the M\"obius (or non-additivity) operators which are defined through M\"obius transforms, and which are shown to be intimately related to commutators. 
The lack of distributivity in the Birkhoff-von Neumann lattice $\Lambda _d$, causes
deviations from the law of the total probability (which is central in Kolmogorov's probability theory). 
Projectors which quantify the lack of distributivity in $\Lambda _d$, and also deviations from the law of the total probability, are introduced.
All these operators, are observables and they can be measured experimentally. 
Constraints for the M\"obius operators, which are based on the properties of the Birkhoff-von Neumann lattice
(which in the case of finite quantum systems is a modular lattice), are derived.
Application of this formalism in the context of coherent states, generalizes coherence to multi-dimensional structures.
\end{abstract}
\maketitle
MSC: 60A05, 06C05

\section{Introduction}
Probability theory needs to be compatible with the geometry of the structure, in which it is applied. 
In the case of quantum mechanics, the properties of quantum probabilities need to be compatible with the Birkhoff-von Neumann lattice\cite{LO1,LO2,LO3,LO4,LO5,LO6,LO7}.
Probability theories in a quantum context\cite{G1,G2,G3,G4,G5,G6,G7,G8}, include
operational approaches and convex geometry methods \cite{CO1,CO2,CO3,CO4,CO5},
test spaces \cite{TE1,TE2},
fuzzy phase spaces \cite{FF1,FF2},
category theory methods\cite{CA1,CA2}, etc.

In recent work \cite{VO1,VO2} we interpreted quantum probabilities as non-additive probabilities.
The basic property of Kolmogorov probabilities is additivity, which can be written as $q(A\cup B)-q(A)-q(B)+q(A\cap B)=0$, and which for exclusive events reduces to $q(A\cup B)-q(A)-q(B)=0$. 
Non-additive probabilities have been studied for a long time, at an abstract level, e.g. the early work at a philosophical level \cite{K1}, which was translated into mathematical axioms  in \cite{K2,K3}, and discussed further in \cite{K4} (the history of non-additive probabilities is discussed in \cite{K5}).
The basic idea is expressed in everyday language as `the whole is greater than the sum of its parts'.
For example, the percentage of votes for a coalition of political parties, might be greater than the sum of the percentages for each component party.
More recently the Dempster-Shafer theory \cite{D0,D1,D2,D3,D4,D5,D6} led to a practical scheme 
of non-additive probabilities, which has been used extensively in Artificial Intelligence, Operations Research, Game theory, Mathematical Economics, etc (its place within the general area of Bayesian probabilities is discussed in \cite{KY}).

We have shown\cite{VO1} a direct link between non-commutativity and the non-additivity of quantum probabilities, and we have 
interpreted quantum probabilities in the context of the Dempster-Shafer theory.
In particular, we have introduced the non-additivity operator ${\mathfrak D}(H_1, H_2)$  which measures deviations from the additivity of probability (see Eq.(\ref{32}) below) ,
and proved that it is related to the commutator of the projectors $\Pi(H_1), \Pi(H_2)$, to the subspaces $H_1,H_2$, as described in Eq.(\ref{e3}).

Probability theory is tacitly related to logic, because it needs the concepts of 
conjunction, disjunction, and negation.
Kolmogorov probability is intimately related to Boolean logic which is related to set theory, and is appropriate for Classical Mechanics.
Quantum Mechanics is based on
the Birkhoff-von Neumann orthomodular lattice of closed subspaces of the Hilbert space.
We consider systems with finite-dimensional Hilbert space, in which case this is a modular orthocomplemented lattice.
Within this lattice there are sublattices which are Boolean algebras, and in these `islands' quantum probability obeys the additivity property and 
it can be interpreted as Kolmogorov probability. But in the full lattice quantum probabilities violate the additivity property,
and we interpreted them as Dempster-Shafer probabilities. An important difference between Boolean algebras and modular lattices, is that the property of distributivity in the former, 
is replaced with modularity (Eq.(\ref{292}) below) which is a `weak distributivity', in the latter.

In this paper:
\begin{itemize}
\item
We use M\"obius transforms to introduce M\"obius (or non-additivity) operators that involve many subspaces, and show their relation to commutators 
(section \ref{s1}).
In this sense, the formalism of non-additive probabilities, complements the non-commutativity formalism.
\item
We quantify the lack of distributivity in the modular lattice $\Lambda _d$, with the projectors 
$\varpi _1(H_1,H_2|H_0)$ and $\varpi _2(H_1,H_2|H_0)$ in Eq.(\ref{40}). 
The lack of distributivity causes deviations from the law of the total probability (which is fundamental for Kolmogorov probabilities).
These deviations are quantified with the
projectors $\pi (H_0;H_1)$ in Eq.(\ref{123}) (section \ref{s3}). 

\item
All of these operators  are observables and they can be measured experimentally.
If quantum probabilities were Kolmogorov probabilities, they would be zero (section \ref{s4}). 
\item
We use the properties of modular lattices to find constraints for the non-additivity operators
${\mathfrak D}(H_1, H_2)$ (propositions \ref{P1},\ref{P2},\ref{P3}, in section \ref{s5}).
This links probability theory with the geometry (Birkhoff-von Neumann lattice) of quantum mechanics.
\item
We use these projectors and operators in the context of coherent states.
This generalizes coherence to multi-dimensional structures (section \ref{s10}).
\end{itemize}

\section{Preliminaries}
\subsection{Kolmogorov probabilities versus Dempster-Shafer probabilities}

Let $\Omega$ be a set of alternatives, with finite cardinality $|\Omega|$.
The powerset $2^{\Omega}$ contains $2^{|\Omega|}$ subsets of $\Omega$, for which we use the notation $A_i$ with $i=1,...,2^{|\Omega|}$.
The $2^{\Omega}$ is a Boolean algebra with intersection ($\cap$), union ($\cup$), and complement (${\overline A}=\Omega -A$),
as conjunction, disjunction, and negation, correspondingly.
Kolmogorov probability is based on this Boolean algebra, and it assigns to $A_i$ a number $p(A_i)\in [0,1]$, such that
\begin{eqnarray}\label{A}
&&p(\emptyset )=0;\;\;\;\;p(\Omega)=1\\
&&A_1\cap A_2=\emptyset\;\;\rightarrow\;\;p(A_1\cup A_2)= p(A_1)+p(A_2)\label{1}
\end{eqnarray}
The last equation is the additivity relation, which can also be written in a more general way, in order to include non-exclusive events, as:
\begin{eqnarray}\label{1}
\delta (A_1,A_2)=0;\;\;\;\;\;\delta (A_1,A_2)=p(A_1\cup A_2)-p(A_1)-p(A_2)+p(A_1\cap A_2).
\end{eqnarray}
Using this we can prove the following relations, that involve three sets:
\begin{eqnarray}\label{112}
\delta (A_1,A_2,A_3)&=&{\widetilde  \delta} (A_1,A_2,A_3)=0\nonumber\\
\delta (A_1,A_2,A_3)&=&p(A_1\cup A_2\cup A_3)-p(A_1\cup A_2)-p(A_1\cup A_3)-p(A_2\cup A_3)\nonumber\\&+&p(A_1)+p(A_2)+p(A_3)-p(A_1\cap A_2\cap A_3)\nonumber\\
{\widetilde  \delta} (A_1,A_2,A_3)&=&
p(A_1\cap A_2\cap A_3)-p(A_1\cap A_2)-p(A_1\cap A_3)-p(A_2\cap A_3)\nonumber\\&+&p(A_1)+p(A_2)+p(A_3)-p(A_1\cup A_2\cup A_3)
\end{eqnarray}
The proof is straightforward, but we stress that it uses the distributivity property of set theory.
In a quantum context later, the distributivity property is not valid.
$\delta (A_1,A_2,A_3)$ and ${\widetilde \delta}(A_1,A_2,A_3)$ are dual to each other, in the sense that $\cup$ and $\cap$ in $\delta (A_1,A_2,A_3)$,
are replaced with $\cap$ and $\cup$ in ${\widetilde \delta}(A_1,A_2,A_3)$.

We can generalize these formulas using the M\"obius transform, which has been used extensively in Combinatorics, 
after the pioneering work by Rota\cite{R1,R2}.
In its simplest form, it is used in the inclusion-exclusion principle that gives the cardinality of the union of overlapping sets.
The M\"obius transform describes the overlaps between sets, and it can be used to avoid the `double-counting'.
More generally, the M\"obius transform is applied to partially ordered structures,
and in this paper we use it with the Birkhoff-von Neumann modular lattice of subspaces $\Lambda _d$.

We introduce the M\"obius transform as
\begin{eqnarray}\label{2}
&&{\cal M}(A_1,...,A_n)=\sum _{E} (-1)^{n-|E|}\;p\left (\bigcup _{i\in E} A_i\right );\;\;\;\;\;E\subseteq\{1,...,n\}\nonumber\\
&&{\widetilde  {\cal M}}(A_1,...,A_n)=\sum _{E} (-1)^{n-|E|}\;p\left (\bigcap _{i\in E} A_i\right )
\end{eqnarray}
where $|E|$ is the cardinality of $E$. For $n=1$, we get ${\cal M}(A_1)={\widetilde  {\cal M}}(A_1)=p(A_1)$.
The quantities
\begin{eqnarray}\label{20}
&&\delta(A_1,...,A_n)={\cal M}(A_1,...,A_n)+(-1)^n\;p(A_1\cap ...\cap A_n)=0\nonumber\\
&&{\widetilde  \delta}(A_1,...,A_n)={\widetilde {\cal M}}(A_1,...,A_n)+(-1)^n\;p(A_1\cup ...\cup A_n)=0.
\end{eqnarray}
generalize the quantities in Eqs.(\ref{1}), (\ref{112}).
Indeed, Eq.(\ref{20}) reduces to Eqs.(\ref{1}), (\ref{112}), for $E=\{A_1,A_2\}$ and $E=\{A_1,A_2,A_3\}$, correspondingly.
We refer to them as M\"obius quantities or non-additivity quantities (because they are equal to zero in the case of additive Kolmogorov probabilities).

\paragraph *{Distributivity and the law of total probability in Kolmogorov's probability theory:}
If $B_1,...,B_n$ is a partition of the set $\Omega$ then using the distributivity property we get
\begin{eqnarray}
A=A\cap \Omega=A\cap (B_1\cup...\cup B_n)=(A\cap B_1)\cup...\cup (A\cap B_n).
\end{eqnarray}
Since $(A\cap B_i)\cap(A\cap B_j)=\emptyset$, and using the additivity property of Kolmogorov probabilities, we get the law of total probability
\begin{eqnarray}
\Delta (A;B_1,...,B_n)=p(A)-p(A\cap B_1)-...-p(A\cap B_n)=0.
\end{eqnarray}
This law is central in Kolmogorov's probability theory and it is based on both the distributivity property of set theory, and the additivity property 
(Eq.(\ref{1})). In the case of partition of the set $\Omega$, into $B$ and its complement ${\overline B}$,
the above relation becomes
\begin{eqnarray}\label{456}
\Delta (A;B)=p(A)-p(A\cap B)-p(A\cap {\overline B})=0.
\end{eqnarray}

\paragraph*{The Dempster-Shafer theory\cite{D1,D2,D3,D4,D5,D6} :}This theory assigns two probabilities to a subset $A$ of $\Omega$.
The lower probability (or belief) $\ell (A)$ and the upper probability (or plausibility) $u(A)=1-{ \ell } ( \overline A)$.
It formalizes what in everyday language is called `the worst and best case scenario'.
Kolmogorov probability theory is based on Boolean logic, and
an element of $\Omega$, belongs to either $A$ or to $\overline A$ (the `law of the excluded middle').
In Dempster-Shafer theory there are three categories: `belongs to $A$', `belongs to $\overline A$' and `don't know'.
The lower probability $\ell (A)$ describes the `belongs to $A$', while the upper probability $u(A)$ combines the `belongs to $A$' and the `don't know'.

The lower and upper probabilities do not obey the additivity property of Eq.(\ref{1}), but obey the inequalities
\begin{eqnarray}\label{127}
&&\ell (A\cup B)-\ell (A)-\ell (B)+\ell (A\cap B)\ge 0\nonumber\\
&&u (A\cup B)-u (A)-u (B)+u (A\cap B)\le 0
\end{eqnarray}
For $B={\overline A}$ these equations reduce to 
\begin{eqnarray}
\ell (A)+\ell (\overline A)\le 1;\;\;\;\;u(A)+u (\overline A)\ge 1.
\end{eqnarray}
The $1-\ell (A)-\ell (\overline {A})=u(A)+u (\overline A)-1$ corresponds to Dempster\rq{}s `don't know' case.
In contrast for Kolmogorov probabilities $p(A)+p (\overline A)=1$.
Also the law of total probability which is based on the additivity property, is not valid for Dempster-Shafer probabilities.

Dempster introduced these probabilities through a multivalued map from a set $S_1$ to another set $S_2$. He showed that due to multivaluedness, standard (Kolmogorov) probabilities in the set $S_1$, become lower and upper probabilities in $S_2$. In a quantum context, Dempster multivaluedness is the fact that  a classical product of two quantities 
$\theta \phi$ becomes a product of two non-commuting operators which can be ordered as ${\hat \theta} 
{\hat \phi}$ or ${\hat \phi} {\hat \theta}$ or in between.

\subsection{The modular orthocomplemented lattice $\Lambda _d$ of subspaces of $H(d)$}

We consider a quantum system $\Sigma (d)$ with variables in ${\mathbb Z}(d)$ (the integers modulo $d$), with states in a $d$-dimensional
Hilbert space $H(d)$\cite{vour,vour2}.
We also consider the basis of `position states' $|{X};m\rangle$ 
\begin{eqnarray}
\ket{X;0}=(1\;0...0)^T;\;\;\;\;\;\ket{X;d-1}=(0\; 0...1)^T.
\end{eqnarray}
The basis of `momentum states' $|{P};m\rangle$ is defined through a finite Fourier transform:
\begin{equation}\label{PPPT}
|{P};n\rangle=F|{X};n\rangle;\;\;\;\;
F=d^{-1/2}\sum _{m,n}\omega (mn)\ket{X;m}\bra{X;n};\;\;\;\;
\omega(m)=\exp \left (i\frac {2\pi m}{d}\right ).
\end{equation}
The $X,P$ in the notation indicate position and momentum states.

The set of subspaces of $H(d)$ with the logical operations\cite{la1,la2,la3,la4,la5}
\begin{eqnarray}
H_1\wedge H_2=H_1\cap H_2;\;\;\;\;\;H_1\vee H_2={\rm span}(H_1 \cup H_2)
\end{eqnarray}
is the Birkhoff-von Neumann orthomodular lattice \cite{LO1,LO2,LO3,LO4}.
These two operations define the logical `AND' and `OR' in a quantum context.
In our case of finite Hilbert spaces, the lattice is modular orthocomplemented lattice and we call it $\Lambda _d$.
The corresponding partial order $\prec$ is `subspace'.
The smallest element in $\Lambda _d$ is ${\cal O}=H(0)$ (the zero-dimensional subspace that contains only the zero vector), and 
the largest element is ${\cal I}=H(d)$. 

We use the notation $\Pi(H_1)$ for the projector to the subspace $H_1$.
Projectors become probabilities by taking the trace of their product with a density matrix $\rho$.
If the dimension of $H_1$ is $d_1\ge 2$, the $\Pi(H_1)$ is a cumulative projector,
in the sense that in the corresponding probabilities there is a 
variable (related to a basis in $H_1$) which takes a range of values ($d_1$ values). 
This is an extension of the term cumulative probabilities, where the random variable takes a range of values.

$H_1^{\perp}$ denotes the orthocomplement of $H_1$, and by definition it obeys the properties
\begin{eqnarray}
&&H_1\wedge H_1^{\perp}={\cal O};\;\;\;\;H_1\vee H_1^{\perp}={\cal I};\;\;\;\;(H_1^{\perp})^{\perp}=H_1\nonumber\\
&&(H_1\wedge H_2)^{\perp}=H_1^{\perp}\vee H_2^{\perp};\;\;\;\;(H_1\vee H_2)^{\perp}=H_1^{\perp}\wedge H_2^{\perp}.
\end{eqnarray}
We say that the space $H_1$ commutes with $H_2$, and we denote this as $H_1{\cal C} H_2$ if
\begin{eqnarray}\label{10}
H_1=(H_1\wedge H_2)\vee (H_1\wedge H_2^{\perp}).
\end{eqnarray}
In orthomodular lattices (like $\Lambda _d$) if $H_1{\cal C} H_2$ then $H_2{\cal C} H_1$.
Also $H_1^{\perp}{\cal C} H_1$.
Furthermore $H_1{\cal C} H_2$ if and only if $[\Pi(H_1 ),\Pi(H_2 )]=0$.

\section{M\"obius operators}\label{s1}

In analogy to $\delta (A_1,A_2)$,  we have introduced in \cite{VO1} the following non-additivity operator:
\begin{eqnarray}\label{32}
{\mathfrak D}(H_1, H_2)=\Pi(H_1\vee H_2)+\Pi(H_1\wedge H_2)-\Pi(H_1)-\Pi(H_2);\;\;\;\;\;{\rm Tr}[{\mathfrak D}(H_1, H_2)]=0.
\end{eqnarray} 
We have proved in \cite{VO1} that
the commutator $[\Pi(H_1),\Pi(H_2)]$ is related to ${\mathfrak D}(H_1, H_2)$, through the relation:
\begin{eqnarray}\label{e3}
[\Pi (H_1),\Pi(H_2)]={\mathfrak D}(H_1, H_2)[\Pi(H_1)-\Pi(H_2)].
\end{eqnarray}
This relation links directly non-commutativity with non-additive probabilities.
In sublattices of $\Lambda _d$ which are Boolean algebras $[\Pi (H_1),\Pi(H_2)]=0$, and ${\mathfrak D}(H_1, H_2)=0$ and 
probabilities can be interpreted as additive (Kolmogorov) probabilities.  
But in the full lattice, in general the projectors do not commute, the ${\mathfrak D}(H_1, H_2)$ is non-zero, and the corresponding probabilities are non-additive.

We interpret the probabilities ${\rm Tr}[\rho \Pi(H_1)]$, ${\rm Tr}[\rho \Pi(H_2)]$ as upper or lower probabilities
within the Dempster-Shafer theory, according to whether the ${\rm Tr}[\rho{\mathfrak D}(H_1, H_2)]$ is positive or negative, correspondingly (see Eqs.(\ref{127})).
 Below we generalize this to M\"obius (or non-additivity) operators with many subspaces, using M\"obius transform in analogous way to Eq.(\ref{20}):
\begin{eqnarray}\label{106}
{\mathfrak D}(H_1,...,H_n)=\sum  _{E}(-1)^{n-|E|}
\Pi\left (\bigvee _{i\in E} H_i\right )+(-1)^n\Pi\left (\bigwedge _{i\in E} H_i\right );\;\;\;\;
E\subseteq\{1,...,n\}.
\end{eqnarray}
Also
\begin{eqnarray}\label{107}
{\widetilde {\mathfrak D}}(H_1,...,H_n)=\sum  _{E}(-1)^{n-|E|}
\Pi\left (\bigwedge _{i\in E} H_i\right )+(-1)^n\Pi\left (\bigvee _{i\in E} H_i\right ).
\end{eqnarray}

Examples are the operator ${\mathfrak D}(H_1,H_2)$ of Eq.(\ref{32}), the
\begin{eqnarray}\label{df1}
{\mathfrak D} (H_1, H_2, H_3)&=&
\Pi(H_1\vee H_2 \vee H_3)
-\Pi(H_1\vee H_2 )
-\Pi(H_1\vee H_3)\nonumber\\
&-&\Pi(H_2 \vee H_3)
+\Pi(H_1)+\Pi(H_2 )+\Pi(H_3)-\Pi(H_1\wedge H_2 \wedge H_3)
\end{eqnarray}
and
\begin{eqnarray}\label{df2}
{\widetilde  {\mathfrak D}} (H_1, H_2, H_3)&=&
\Pi(H_1\wedge H_2 \wedge H_3)
-\Pi(H_1\wedge H_2 )
-\Pi(H_1\wedge H_3)\nonumber\\
&-&\Pi(H_2 \wedge H_3)
+\Pi(H_1)+\Pi(H_2 )+\Pi(H_3)-\Pi(H_1\vee H_2 \vee H_3).
\end{eqnarray}
These operators are the analogues of $\delta (A_1, A_2, A_3)$ and ${\widetilde \delta}(A_1,A_2,A_3)$.
Both the ${\mathfrak D} (H_1, H_2, H_3)$ and ${\widetilde {\mathfrak D}} (H_1, H_2, H_3)$
are  symmetric, i.e., they do not change with permutations of the $H_1, H_2, H_3$. 

\begin{remark}
\mbox{}
\begin{itemize}
\item
Classical `AND' and `OR' logical operations are defined on Boolean algebras.
The relevant probability theory is Kolmogorov's theory, and it is based on set theory which is a Boolean algebra.
The disjunction is the union of two subsets.
The M\"obius transform in Eq.(\ref{2}) uses probabilities corresponding to unions and intersections between sets.
Using the distributivity property of Boolean algebras, we prove that the M\"obius quantities ${\delta}(A_1,...,A_n)$
and ${\widetilde  \delta}(A_1,...,A_n)$ are equal to zero.
\item
Quantum `AND' and `OR' logical operations are defined on the Birkhoff-von Neumann modular lattice $\Lambda _d$.
The disjunction is much more than the union of two subspaces, because it contains all superpositions of vectors in the two subspaces.
Quantum probabilities are based on projectors.
The M\"obius transform in Eq.(\ref{106}),(\ref{107}) uses projectors corresponding to disjunctions and conjunctions of subspaces.
The lack of the distributivity property (only modularity holds which is a weak version of distributivity) implies that
the M\"obius projectors ${\mathfrak D}(H_1,...,H_n)$ and ${\widetilde {\mathfrak D}}(H_1,...,H_n)$ 
are non-zero. We have seen in Eq.(\ref{e3}) that there is a link between M\"obius projectors and non-commutativity, and 
this is extended further below. We also define non-distributivity projectors which quantify the lack of distributivity.
\end{itemize}
\end{remark}

The following proposition provides relations between the M\"obius operators and links them to commutators of $\Pi (H_1)$, $\Pi(H_2)$, $\Pi(H_3)$.
In particular, Eq.(\ref{bg}) (together with Eq.(\ref{e3})) show that the M\"obius operators are intimately connected to non-commutativity.
\begin{proposition}\label{12}
\mbox{}
\begin{itemize}
\item[(1)]
If $H_1\prec H_2$ then ${\mathfrak D} (H_1, H_2, H_3)=-{\mathfrak D} (H_1, H_3)$
and ${\widetilde  {\mathfrak D}} (H_1, H_2, H_3)=-{\mathfrak D} (H_2, H_3)$.
\item[(2)]
If the $H_1,H_2,H_3$ belong to the same chain (i.e., $H_i\prec H_j\prec H_k$ where $\{i,j,k\}=\{1,2,3\}$) then 
${\mathfrak D} (H_1, H_2, H_3)={\widetilde  {\mathfrak D}} (H_1, H_2, H_3)=0$.
\item[(3)]
\begin{eqnarray}\label{315}
{\mathfrak D} (H_1, H_2, H_3)+{\widetilde  {\mathfrak D}} (H_1, H_2, H_3)
+{\mathfrak D} (H_1, H_2)+{\mathfrak D} (H_1, H_3)+{\mathfrak D} (H_2, H_3)=0
\end{eqnarray}
\item[(4)]
\begin{eqnarray}\label{33}
\Pi (H_1)\Pi (H_3)\Pi (H_2)-\Pi(H_1\wedge H_2 \wedge H_3)=\Pi (H_1){\mathfrak D} (H_1, H_2, H_3)\Pi (H_2).
\end{eqnarray}
\item[(5)]
\begin{eqnarray}\label{bg}
\left [[\Pi (H_1),\Pi(H_3)],\Pi(H_2)\right ]&=&
[\Pi (H_1)-\Pi (H_3)]{\mathfrak D} (H_1, H_2, H_3)\Pi (H_2)\nonumber\\&+&\Pi (H_2){\mathfrak D} (H_1, H_2, H_3)[\Pi (H_1)-\Pi (H_3)].
\end{eqnarray}

\end{itemize}
\end{proposition}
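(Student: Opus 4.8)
The plan is to treat the five parts separately, deriving the substantial identities (4) and (5) from a single ``sandwiching'' computation, and obtaining (1)--(3) by direct lattice and bookkeeping arguments.

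For parts (1) and (2) I would simply substitute the lattice identities implied by the hypotheses into the definitions (\ref{df1}) and (\ref{df2}). If $H_1\prec H_2$ then $H_1\vee H_2=H_2$, $H_1\wedge H_2=H_1$, $H_1\vee H_2\vee H_3=H_2\vee H_3$, and $H_1\wedge H_2\wedge H_3=H_1\wedge H_3$; plugging these into ${\mathfrak D}(H_1,H_2,H_3)$ causes the $\Pi(H_2\vee H_3)$ and $\Pi(H_2)$ terms to cancel, leaving exactly $-{\mathfrak D}(H_1,H_3)$, and the analogous substitution in ${\widetilde{\mathfrak D}}$ gives $-{\mathfrak D}(H_2,H_3)$. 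Part (2) is then immediate: for a chain $H_i\prec H_j\prec H_k$ one applies part (1) and notes that ${\mathfrak D}(H_a,H_b)=0$ whenever $H_a\prec H_b$, since then $H_a\vee H_b=H_b$ and $H_a\wedge H_b=H_a$. Part (3) is pure linear bookkeeping: I would collect the coefficient of each projector $\Pi(H_1\vee H_2\vee H_3)$, $\Pi(H_i\vee H_j)$, $\Pi(H_i\wedge H_j)$, $\Pi(H_1\wedge H_2\wedge H_3)$ and $\Pi(H_i)$ across the five operators in (\ref{315}) and check that every coefficient vanishes.

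The core of the proposition is part (4), which I would prove by sandwiching ${\mathfrak D}(H_1,H_2,H_3)$ between $\Pi(H_1)$ on the left and $\Pi(H_2)$ on the right and using the projector absorption relations: $\Pi(H_1)\Pi(K)=\Pi(H_1)$ whenever $H_1\prec K$, $\Pi(K)\Pi(H_2)=\Pi(H_2)$ whenever $H_2\prec K$, and $\Pi(H_1)\Pi(K)=\Pi(K)=\Pi(K)\Pi(H_2)$ whenever $K\prec H_1$ and $K\prec H_2$. Since $H_1$ and $H_2$ lie below every join appearing in (\ref{df1}), and the triple meet lies below both $H_1$ and $H_2$, the six terms carrying $\Pi(H_1\vee H_2\vee H_3)$, $\Pi(H_1\vee H_2)$, $\Pi(H_1\vee H_3)$, $\Pi(H_2\vee H_3)$, $\Pi(H_1)$, $\Pi(H_2)$ all collapse to $\pm\Pi(H_1)\Pi(H_2)$ with coefficients summing to $1-1-1-1+1+1=0$; the term $\Pi(H_3)$ survives as $\Pi(H_1)\Pi(H_3)\Pi(H_2)$, and $-\Pi(H_1\wedge H_2\wedge H_3)$ survives unchanged, yielding (\ref{33}).

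Part (5) then follows mechanically once I record the permuted versions of (4), which are available because ${\mathfrak D}(H_1,H_2,H_3)$ is symmetric under permutations of its arguments: for any assignment $\{a,b,c\}=\{1,2,3\}$ one has $\Pi(H_a){\mathfrak D}(H_1,H_2,H_3)\Pi(H_b)=\Pi(H_a)\Pi(H_c)\Pi(H_b)-\Pi(H_1\wedge H_2\wedge H_3)$. I would expand the left-hand double commutator as $\Pi(H_1)\Pi(H_3)\Pi(H_2)-\Pi(H_3)\Pi(H_1)\Pi(H_2)-\Pi(H_2)\Pi(H_1)\Pi(H_3)+\Pi(H_2)\Pi(H_3)\Pi(H_1)$, and expand the right-hand side $[\Pi(H_1)-\Pi(H_3)]{\mathfrak D}\Pi(H_2)+\Pi(H_2){\mathfrak D}[\Pi(H_1)-\Pi(H_3)]$ into the four sandwiches $\Pi(H_1){\mathfrak D}\Pi(H_2)$, $\Pi(H_3){\mathfrak D}\Pi(H_2)$, $\Pi(H_2){\mathfrak D}\Pi(H_1)$, $\Pi(H_2){\mathfrak D}\Pi(H_3)$. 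Replacing each sandwich by its permuted (4), the four copies of $\Pi(H_1\wedge H_2\wedge H_3)$ cancel in pairs and the surviving signed triple products match the double commutator term by term. The main obstacle is bookkeeping discipline in part (4): one must verify that the join and single-projector contributions genuinely cancel rather than leaving a spurious $\Pi(H_1)\Pi(H_2)$ residue. Once (4) and its permutations are in hand, part (5) is an exercise in matching six signed triple products.
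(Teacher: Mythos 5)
Your proposal is correct and follows essentially the same route as the paper: lattice substitutions for (1), reduction to ${\mathfrak D}(H_a,H_b)=0$ for comparable subspaces in (2), summing the two defining expansions for (3), the $\Pi(H_1)\cdots\Pi(H_2)$ sandwich with absorption relations for (4), and deriving (5) from permuted instances of (4) via the symmetry of ${\mathfrak D}(H_1,H_2,H_3)$. The only cosmetic difference is that the paper packages the four sandwiches of part (5) into the symmetrized intermediate identity $\Pi(H_1){\mathfrak D}\Pi(H_2)+\Pi(H_2){\mathfrak D}\Pi(H_1)=\Pi(H_1)\Pi(H_3)\Pi(H_2)+\Pi(H_2)\Pi(H_3)\Pi(H_1)-2\Pi(H_1\wedge H_2\wedge H_3)$, whereas you substitute the four permuted sandwiches directly; and you prove Eq.~(\ref{333}) inline where the paper cites \cite{VO1}.
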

\begin{proof}
\mbox{}
\begin{itemize}
\item[(1)]
If $H_1\prec H_2$ then
\begin{eqnarray}
\Pi(H_1\vee H_2 \vee H_3)=\Pi(H_2 \vee H_3);\;\;\;\;\;
\Pi(H_1\vee H_2 )=\Pi(H_2 );\;\;\;\;\;
\Pi(H_1\wedge H_2 \wedge H_3)=\Pi(H_1\wedge H_3).
\end{eqnarray}
Using this we prove that ${\mathfrak D} (H_1, H_2, H_3)=-{\mathfrak D} (H_1, H_3)$.
In analogous way we prove that ${\widetilde {\mathfrak D}} (H_1, H_2, H_3)=-{\mathfrak D} (H_2, H_3)$.
\item[(2)]
If $H_i\prec H_j\prec H_k$ then according to the first part of this proposition ${\mathfrak D} (H_i, H_j, H_k)=-{\mathfrak D} (H_i, H_k)$.
But we have also proved in  \cite{VO1}, that 
\begin{eqnarray}\label{333}
H_i \prec H_k\;\;\rightarrow \;\;{\mathfrak D} (H_i, H_k)=0.
\end{eqnarray}
In analogous way we prove that ${\widetilde {\mathfrak D}} (H_1, H_2, H_3)=0$.
\item[(3)]
We add Eqs.(\ref{df1}),(\ref{df2}) and get Eq.(\ref{315}).
\item[(4)]
We multiply ${\mathfrak D} (H_1, H_2, H_3)$ with $\Pi (H_1)$ on the left, and $\Pi (H_2)$ on the right, and we get Eq.(\ref{33}).
\item[(5)]
Using Eq.(\ref{33}) we prove that
\begin{eqnarray}
&&\Pi (H_1){\mathfrak D} (H_1, H_2, H_3)\Pi (H_2)+\Pi (H_2){\mathfrak D} (H_1, H_2, H_3)\Pi (H_1)\nonumber\\&&=\Pi (H_1)\Pi (H_3)\Pi (H_2)+\Pi (H_2)\Pi (H_3)\Pi (H_1)-2\Pi(H_1\wedge H_2 \wedge H_3).
\end{eqnarray}
Using this we can now prove Eq.(\ref{bg}). 

\end{itemize}

\end{proof}

\begin{example}
We consider the $3$-dimensional space $H(3)$ and its one-dimensional subspaces $H_1$, $H_2$, $H_3$, defined with the 
following vectors, in the position representation:
\begin{eqnarray}\label{258}
v_1=(0.3, 0.3, 0.905)^T;\;\;\;\;v_2=(0.4,0.5,0.768)^T;\;\;\;\;v_3=\frac{v_1+v_2}{|v_1+v_2|}=(0.353,0.403,0.844)^T.
\end{eqnarray} 
We find the $\Pi (H_1\vee H_2)$ as follows.
The vector 
\begin{eqnarray}
v_2'=[1-\Pi(H_1)]v_2=v_2-v_1(v_1^{\dagger} v_2),
\end{eqnarray}
is perpendicular to $v_1$, and it is on the plane defined by the $v_1,v_2$.
We normalize $v_2'$ into $w_2=v_2'/|v_2'|$ and then $\Pi(H_1\vee H_2)=\Pi(H_1)+w_2w_2^{\dagger}$.
In analogous way we calculate the $\Pi(H_1\vee H_3)$, etc.
Also
\begin{eqnarray}\label{7b}
H_1\wedge H_2 =H_2 \wedge H_3=H_1\wedge H_3={\cal O}.
\end{eqnarray}

In this case
\begin{eqnarray}\label{ex10}
&&{\mathfrak D}(H_1, H_2)=
\left (
\begin{array}{ccc}
0.019&0.142&-0.480\\
0.142&0.403&-0.714\\
-0.480&-0.714&-0.422\\
\end{array}
\right );\;\;\;\;
{\mathfrak D}(H_1, H_3)=
\left (
\begin{array}{ccc}
0.055&0.200&-0.471\\
0.200&0.490&-0.671\\
-0.471&-0.671&-0.545\\
\end{array}
\right )\nonumber\\&&
{\mathfrak D}(H_2, H_3)=
\left (
\begin{array}{ccc}
-0.014&0.090&-0.506\\
0.090&0.330&-0.783\\
-0.506&-0.783&-0.316\\
\end{array}
\right )
\end{eqnarray}
Also
\begin{eqnarray}\label{ex11}
{\mathfrak D}(H_1, H_2, H_3)=
\left (
\begin{array}{ccc}
-0.054&-0.210&0.457\\
-0.210&-0.539&0.668\\
0.457&0.668&0.593\\
\end{array}
\right );\;\;\;\;\;
{\widetilde  {\mathfrak D}}(H_1, H_2, H_3)=
\left (
\begin{array}{ccc}
-0.006&-0.222&1.000\\
-0.222&-0.685&1.499\\
1.000&1.499&0.691\\
\end{array}
\right ).
\end{eqnarray}
If quantum probabilities were Kolmogorov probabilities, all these matrices would have been zero.
In this sense they quantify deviations of quantum probabilities, from the Kolmogorov probability theory.
\end{example}

\section{Non-distributivity and the violation of the law of the total probability}\label{s3}

\subsection{Non-distributivity projectors}
In any lattice the following distributivity inequalities hold:
\begin{eqnarray}\label{29}
(H_1\wedge H_2) \vee H_0 \prec (H_1\vee H_0 )\wedge(H_2\vee H_0)\nonumber\\
(H_1\vee H_2) \wedge H_0\succ (H_1\wedge H_0 )\vee(H_2\wedge H_0)
\end{eqnarray}
In a distributive lattice they become equalities.
Distributivity relations that involve $H_0$ with many subspaces $H_1,...,H_n$, can also be written.

The lattice $\Lambda _d$ is modular but non-distributive. 
The modularity property is
\begin{eqnarray}\label{292}
H_1\prec H_3\;\rightarrow\;H_1\vee(H_2\wedge H_3)=
(H_1\vee H_2) \wedge H_3
\end{eqnarray}
and is weaker than distributivity (every distributive lattice is modular).
Examples of distributive lattices are the normal subgroups of any group, and the subspaces of any finite-dimensional vector space.

We introduce the following projectors that measure deviations from distributivity:
\begin{eqnarray}\label{40}
&&{\varpi}_1 (H_1, H_2| H_0)=\Pi[(H_1\vee H_0 )\wedge(H_2\vee H_0)]-\Pi[(H_1\wedge H_2) \vee H_0]\nonumber\\
&&{\varpi}_2 (H_1, H_2| H_0)=
\Pi[(H_1\vee H_2) \wedge H_0]
-\Pi[(H_1\wedge H_0 )\vee(H_2\wedge H_0)]\nonumber\\
&&[{\varpi}_i (H_1, H_2| H_0)]^2 ={\varpi}_i (H_1, H_2| H_0);\;\;\;\;i=1,2
\end{eqnarray}
Clearly these functions do not change, under permutation of $H_1,H_2$.

The following relations are easily proved:
\begin{eqnarray}\label{349}
{\varpi}_1 (H_1, H_2| H_0)&=&
-{\mathfrak D} (H_1 \wedge H_2, H_0)+{\mathfrak D} (H_1 \wedge H_0, H_2\wedge H_0)
-\Pi(H_1\wedge H_2)\nonumber\\
&+&\Pi(H_1\wedge H_0 )
+\Pi(H_2\wedge H_0)-\Pi(H_0)\nonumber\\&=&
-{\mathfrak D} (H_1, H_2, H_0)-{\mathfrak D} (H_1\wedge H_2, H_0)-{\mathfrak D} (H_1, H_2)+{\mathfrak D} (H_1\vee H_0, H_2\vee H_0).
\end{eqnarray}
Also
\begin{eqnarray}
{\varpi}_2 (H_1, H_2| H_0)&=&
{\mathfrak D} (H_1 \vee H_2, H_0)-{\mathfrak D} (H_1 \vee H_0, H_2\vee H_0)
+\Pi(H_1\vee H_2)
\nonumber\\&-&\Pi(H_1\vee H_0 )
-\Pi(H_2\vee H_0)+\Pi(H_0)
\nonumber\\&=&
{\widetilde{\mathfrak D}} (H_1, H_2, H_0)+{\mathfrak D} (H_1\vee H_2, H_0)+{\mathfrak D} (H_1, H_2)-{\mathfrak D} (H_1\wedge H_0, H_2\wedge H_0).
\end{eqnarray}
These equations show that non-distributivity, non-additivity of probability, and non-commutativity are all linked together.
This statement is also strengthened with the following proposition:
\begin{proposition}
\mbox{}
\begin{itemize}
\item[(1)]
${\varpi}_1 (H_1, H_1^{\perp}| H_0)={\varpi}_2 (H_1, H_1^{\perp}| H_0)=0$ if and only if $[\Pi(H_1 ),\Pi(H_0 )]=0$.
\item[(2)]
If $H_1\prec H_2$ then ${\varpi}_1 (H_1, H_2| H_0)={\varpi}_2 (H_1, H_2| H_0)=0$.
\item[(3)]
If $H_1,H_0$ are comparable (i.e. $H_1\prec H_0$ or $H_0\prec H_1$) then ${\varpi}_1 (H_1, H_2| H_0)={\varpi}_2 (H_1, H_2| H_0)=0$.
\item[(4)]
If any two of the projectors $\Pi(H_1 ),\Pi(H_2 ),\Pi(H_0 )$, (each) commute with the third one, then
${\varpi}_1 (H_1, H_2| H_0)={\varpi}_2 (H_1, H_2| H_0)=0$.
\end{itemize}
\end{proposition}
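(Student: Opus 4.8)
The plan is to exploit the fact that, by the distributivity inequalities (\ref{29}), each $\varpi_i$ is the projector onto the (nonnegative) ``gap'' between the two sides of a distributivity relation. Hence $\varpi_i(H_1,H_2|H_0)=0$ is equivalent to the corresponding relation holding as an \emph{equality of subspaces}, and the whole proposition reduces to verifying lattice identities in $\Lambda_d$. Explicitly, $\varpi_1=0$ iff $(H_1\vee H_0)\wedge(H_2\vee H_0)=(H_1\wedge H_2)\vee H_0$, and $\varpi_2=0$ iff $(H_1\vee H_2)\wedge H_0=(H_1\wedge H_0)\vee(H_2\wedge H_0)$.

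Parts (2) and (3) are then pure lattice computations. For (2), if $H_1\prec H_2$ then $H_1\wedge H_2=H_1$, $H_1\vee H_2=H_2$, and by monotonicity $H_1\vee H_0\prec H_2\vee H_0$ and $H_1\wedge H_0\prec H_2\wedge H_0$; substituting into the two identities, both sides collapse to the same subspace ($H_1\vee H_0$ for $\varpi_1$, $H_2\wedge H_0$ for $\varpi_2$), so modularity is not even needed. For (3) I would split into the cases $H_1\prec H_0$ and $H_0\prec H_1$. In each identity one case collapses by absorption (both sides become $H_0$), while the other is settled by the modularity law (\ref{292}) applied to the comparable pair $\{H_0,H_1\}$; e.g.\ for $\varpi_1$ with $H_0\prec H_1$, modularity gives $H_0\vee(H_2\wedge H_1)=(H_0\vee H_2)\wedge H_1$, which is exactly the required identity. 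Thus (3) is the place where genuine modularity, rather than mere absorption, is used.

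For part (1) I would specialize $H_2=H_1^{\perp}$ and use $H_1\wedge H_1^{\perp}={\cal O}$, $H_1\vee H_1^{\perp}={\cal I}$, so that
\begin{eqnarray}
\varpi_1(H_1,H_1^{\perp}|H_0)&=&\Pi[(H_1\vee H_0)\wedge(H_1^{\perp}\vee H_0)]-\Pi(H_0)\nonumber\\
\varpi_2(H_1,H_1^{\perp}|H_0)&=&\Pi(H_0)-\Pi[(H_1\wedge H_0)\vee(H_1^{\perp}\wedge H_0)].
\end{eqnarray}
Then $\varpi_2=0$ iff $H_0=(H_0\wedge H_1)\vee(H_0\wedge H_1^{\perp})$, which is precisely the defining relation (\ref{10}) for $H_0{\cal C}H_1$, equivalently $[\Pi(H_1),\Pi(H_0)]=0$. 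For $\varpi_1$ I would take orthocomplements of the condition $(H_1\vee H_0)\wedge(H_1^{\perp}\vee H_0)=H_0$ to obtain $(H_0^{\perp}\wedge H_1)\vee(H_0^{\perp}\wedge H_1^{\perp})=H_0^{\perp}$, i.e.\ $H_0^{\perp}{\cal C}H_1$, which is equivalent to $H_0{\cal C}H_1$. Hence each of $\varpi_1=0$ and $\varpi_2=0$ is separately equivalent to commutativity, which gives the stated ``if and only if''.

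For part (4) the clean route is the Foulis--Holland theorem: in an orthomodular lattice, if one of three elements commutes with the other two, then the sublattice they generate is distributive. The hypothesis that one of $H_1,H_2,H_0$ commutes with the remaining two is exactly this condition, so every distributive identity among them holds, in particular the two identities characterizing $\varpi_1=0$ and $\varpi_2=0$; this also re-proves the converse direction of part (1), since $H_0{\cal C}H_1$ forces $H_0{\cal C}H_1^{\perp}$ and places $H_0$ in the central role. I expect this to be the main obstacle: it is the one step where an elementary manipulation in $\Lambda_d$ does not suffice and the orthomodular-specific Foulis--Holland result must be invoked, whereas parts (1)--(3) rely only on absorption, the modularity law (\ref{292}), and the definition (\ref{10}) of commuting subspaces.
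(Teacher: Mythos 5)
Your proof is correct, and it takes a genuinely different route from the paper's in the middle parts. Your organizing reduction---since the inequalities (\ref{29}) make each ${\varpi}_i$ a difference of projectors onto nested subspaces, ${\varpi}_i(H_1,H_2|H_0)=0$ holds iff the corresponding distributivity relation holds as an \emph{equality of subspaces}---is sound (the paper records the projector property $[{\varpi}_i]^2={\varpi}_i$ in Eq.~(\ref{40}) for exactly this reason), and it lets you settle parts (2) and (3) by pure lattice algebra: in (2) both sides collapse to $H_1\vee H_0$, respectively $H_2\wedge H_0$, by monotonicity and absorption alone, and in (3) each case needs one absorption identity plus exactly one application of the modular law (\ref{292}). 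The paper instead proves part (2) through the M\"obius-operator calculus: it substitutes ${\mathfrak D}(H_1,H_2,H_0)=-{\mathfrak D}(H_1,H_0)$ (proposition \ref{12}) into the decomposition (\ref{349}) to get ${\varpi}_1=-{\mathfrak D}(H_1,H_2)+{\mathfrak D}(H_1\vee H_0,H_2\vee H_0)$ and then annihilates both terms with Eq.~(\ref{333}); for part (3) it only remarks that the proof is ``analogous''. Your route is more elementary and self-contained---it never invokes the ${\mathfrak D}$ operators, your part (2) is valid in an arbitrary lattice, and your case analysis in (3) makes explicit where modularity (rather than mere absorption) is indispensable, a detail the paper leaves implicit---while the paper's route buys the thematic point that non-distributivity is expressible through the non-additivity operators. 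Parts (1) and (4) essentially coincide with the paper's proof: the same specialization to Eqs.~(\ref{39}) and (\ref{10}) for (1), where your explicit orthocomplementation argument for ${\varpi}_1$ (De Morgan plus $[\Pi(H_0^{\perp}),\Pi(H_1)]=-[\Pi(H_0),\Pi(H_1)]$) actually sharpens the paper's unelaborated ``the converse is also easily seen to be true'' by showing that each of ${\varpi}_1=0$ and ${\varpi}_2=0$ \emph{separately} characterizes commutativity; and the same appeal to the Foulis--Holland theorem for (4).
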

\begin{proof}
\mbox{}
\begin{itemize}
\item[(1)]
In the special case that $H_2=H_1^{\perp}$ the inequalities in Eq.(\ref{29}) become
\begin{eqnarray}\label{300}
(H_1\wedge H_0 )\vee(H_1^{\perp}\wedge H_0)\prec H_0\prec (H_1\vee H_0 )\wedge(H_1^{\perp}\vee H_0),
\end{eqnarray}
and the quantities in Eq.(\ref{40}) become
\begin{eqnarray}\label{39}
{\varpi}_1 (H_1, H_1^{\perp}| H_0)&=&\Pi[(H_1\vee H_0 )\wedge(H_1^{\perp}\vee H_0)]-\Pi(H_0)\nonumber\\
{\varpi}_2 (H_1, H_1^{\perp}| H_0)&=&\Pi(H_0)-\Pi[(H_1\wedge H_0 )\vee(H_1^{\perp}\wedge H_0)].
\end{eqnarray}
If  $[\Pi(H_1 ),\Pi(H_0 )]=0$ or equivalently $H_1 {\cal C} H_0$, the inequalities in Eq.(\ref{300}) become equalities (see Eq.(\ref{10})), and
${\varpi}_1 (H_1, H_1^{\perp}| H_0)={\varpi}_2 (H_1, H_1^{\perp}| H_0)=0$.
The converse is also easily seen to be true.
\item[(2)]
If $H_1\prec H_2$ then ${\mathfrak D} (H_1, H_2, H_0)=-{\mathfrak D} (H_1, H_0)$ (proposition \ref{12}),
and Eq.(\ref{349}) gives
\begin{eqnarray}
\varpi _1 (H_1, H_2| H_0)&=&
{\mathfrak D} (H_1, H_0)-{\mathfrak D} (H_1, H_0)-{\mathfrak D} (H_1, H_2)+{\mathfrak D} (H_1\vee H_0, H_2\vee H_0)\nonumber\\&=&-{\mathfrak D} (H_1, H_2)+{\mathfrak D} (H_1\vee H_0, H_2\vee H_0).
\end{eqnarray}
The fact that $H_1\prec H_2$ implies that $H_1\vee H_0\prec H_2\vee H_0$ and taking into account Eq.(\ref{333}) we see that ${\mathfrak D} (H_1, H_2)={\mathfrak D} (H_1\vee H_0, H_2\vee H_0)=0$.
\item[(3)]
The proof of this part is analogous to the first part.
\item[(4)]
The Foulis-Holland theorem\cite{la1} states that in orthomodular lattices, if $H_1{\cal C} H_0$ and $H_2{\cal C} H_0$ then the sublattice generated by the $H_1, H_2, H_0$ is distributive.
This proves that ${\varpi}_1 (H_1, H_2| H_0)={\varpi}_2 (H_1, H_2| H_0)=0$.

\end{itemize}

\end{proof}
\begin{remark}
If any two of the projectors $\Pi(H_1 ),\Pi(H_2 ),\Pi(H_0 )$ commute with the third one, then 
\begin{eqnarray}
[[\Pi(H_i ),\Pi(H_j )],\Pi(H_k )]=0;\;\;\;\;\{i,j,k\}=\{1,2,3\}.
\end{eqnarray}
We have shown that in this case ${\varpi}_1 (H_1, H_2| H_0)={\varpi}_2 (H_1, H_2| H_0)=0$, and this exemplifies again 
that non-commutativity is linked to non-distributivity.
\end{remark}

\subsection{Deviations from the law of the total probability}

In analogy to Eq.(\ref{456}) we introduce the projectors
\begin{eqnarray}\label{123}
\pi(H_0;H_1)=\Pi(H_0)-\Pi(H_1\wedge H_0 )-\Pi (H_1^{\perp}\wedge H_0)
\end{eqnarray}
which measure deviations from the law of the total probability in Kolmogorov's theory.
\begin{proposition}
\begin{eqnarray}
\pi(H_0;H_1)={\varpi}_2 (H_1, H_1^{\perp}| H_0)+{\mathfrak D}(H_1\wedge H_0, H_1^{\perp}\wedge H_0).
\end{eqnarray}
\end{proposition}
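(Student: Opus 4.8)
The plan is to prove this identity by expanding the two terms on the right-hand side from their definitions and observing that a single shared projector cancels; the statement is purely an algebraic bookkeeping relation among projectors, so no analytic input is needed beyond the lattice axioms already recorded.

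First I would specialize $\varpi _2$ to the orthocomplementary case. Since $H_1\vee H_1^{\perp}={\cal I}$, we have $(H_1\vee H_1^{\perp})\wedge H_0={\cal I}\wedge H_0=H_0$, so the general definition in Eq.~(\ref{40}) collapses to the reduced form in Eq.~(\ref{39}), namely $\varpi _2(H_1,H_1^{\perp}|H_0)=\Pi(H_0)-\Pi[(H_1\wedge H_0)\vee(H_1^{\perp}\wedge H_0)]$. This disposes of the first term entirely in terms of $\Pi(H_0)$ and the one shared projector.

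Next I would expand ${\mathfrak D}(H_1\wedge H_0, H_1^{\perp}\wedge H_0)$ through its definition in Eq.~(\ref{32}). Writing $A=H_1\wedge H_0$ and $B=H_1^{\perp}\wedge H_0$, this equals $\Pi(A\vee B)+\Pi(A\wedge B)-\Pi(A)-\Pi(B)$. The crucial observation is that $\Pi(A\wedge B)=0$: from $A\prec H_1$ and $B\prec H_1^{\perp}$ the monotonicity of $\wedge$ gives $A\wedge B\prec H_1\wedge H_1^{\perp}={\cal O}$, so $A\wedge B={\cal O}$ and its projector vanishes. Hence ${\mathfrak D}(A,B)=\Pi[(H_1\wedge H_0)\vee(H_1^{\perp}\wedge H_0)]-\Pi(H_1\wedge H_0)-\Pi(H_1^{\perp}\wedge H_0)$.

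Adding the two contributions, the projector $\Pi[(H_1\wedge H_0)\vee(H_1^{\perp}\wedge H_0)]$ appears once with a minus sign (from $\varpi _2$) and once with a plus sign (from ${\mathfrak D}$), so it cancels, leaving $\Pi(H_0)-\Pi(H_1\wedge H_0)-\Pi(H_1^{\perp}\wedge H_0)$, which is exactly $\pi(H_0;H_1)$ by Eq.~(\ref{123}). The only steps requiring any care are the legitimacy of the reduction to Eq.~(\ref{39}) and the identity $A\wedge B={\cal O}$; both follow at once from the orthocomplement axioms and lattice monotonicity stated earlier, so there is no genuine obstacle to overcome.
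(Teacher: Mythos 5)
Your proposal is correct and follows essentially the same route as the paper: the paper's proof likewise combines the reduced form ${\varpi}_2(H_1,H_1^{\perp}|H_0)=\Pi(H_0)-\Pi[(H_1\wedge H_0)\vee(H_1^{\perp}\wedge H_0)]$ (Eq.~(\ref{39})) with the expansion ${\mathfrak D}(H_1\wedge H_0,H_1^{\perp}\wedge H_0)=\Pi[(H_1\wedge H_0)\vee(H_1^{\perp}\wedge H_0)]-\Pi(H_1\wedge H_0)-\Pi(H_1^{\perp}\wedge H_0)$ and cancels the shared projector. The only difference is that you spell out what the paper leaves implicit, namely that $(H_1\wedge H_0)\wedge(H_1^{\perp}\wedge H_0)={\cal O}$ (by monotonicity and $H_1\wedge H_1^{\perp}={\cal O}$), which is a welcome but inessential elaboration.
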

\begin{proof}
The proof is based on Eq.(\ref{39}) in conjunction with the relation
\begin{eqnarray}
{\mathfrak D}(H_1\wedge H_0, H_1^{\perp}\wedge H_0)=\Pi[(H_1\wedge H_0)\vee (H_1^{\perp}\wedge H_0)]-\Pi (H_1\wedge H_0)-\Pi (H_1^{\perp}\wedge H_0).
\end{eqnarray}
\end{proof}
We have explained earlier that the law of the total probability for Kolmogorov probabilities, relies on the distributivity property of set theory and on the additivity of probability.
The above proposition shows that the projector $\pi(H_0;H_1)$ which measures deviations from the law of the total probability,
is equal to the projector ${\varpi}_2 (H_1, H_1^{\perp}| H_0)$ which measures deviations from distributivity, plus the operator
${\mathfrak D}(H_1\wedge H_0, H_1^{\perp}\wedge H_0)$ which measures deviations from additivity.

\subsection{Example}
In $H(3)$ we consider the one-dimensional subspaces $H_1$, $H_2$, $H_3$, defined with the vectors $v_1$, $v_2$, $v_3$,
correspondingly, in Eq.(\ref{258}).
The vector $v_3$ is in the plane that the $v_1, v_2$ define, and therefore $H_1\vee H_3=H_2\vee H_3$.
Taking into account Eq.(\ref{7b}), we find that Eqs.(\ref{40}),(\ref{123}), reduce to
\begin{eqnarray}
{\varpi}_1 (H_1, H_2| H_3)&=&\Pi(H_1\vee H_3)-\Pi(H_3)\nonumber\\
{\varpi}_2 (H_1, H_2| H_3)&=&\Pi(H_3)\nonumber\\
\pi(H_3;H_1)&=&\Pi(H_3).
\end{eqnarray}
Therefore
\begin{eqnarray}\label{59}
&&{\varpi}_1 (H_1, H_2| H_3)=
\left (
\begin{array}{ccc}
0.145&0.290&-0.199\\
0.290&0.580&-0.399\\
-0.199&-0.399&0.275\\
\end{array}
\right )\nonumber\\
&&{\varpi}_2 (H_1, H_2| H_3)=\pi(H_3;H_1)=
\left (
\begin{array}{ccc}
0.125&0.142&0.298\\
0.142&0.163&0.340\\
0.298&0.340&0.712\\
\end{array}
\right ).
\end{eqnarray}
If quantum probabilities were Kolmogorov probabilities, all these matrices would have been zero.

\section{Observables}\label{s4}

The non-additivity operators ${\mathfrak D} (H_1, H_2)$, ${\mathfrak D} (H_1, H_2, H_3)$, etc, are Hermitian operators and therefore they are observables.
The projectors ${\varpi}_1 (H_1, H_2| H_3)$, ${\varpi}_2 (H_1, H_2| H_3)$, $\pi(H_0;H_1)$, are also observables.
In this section we discuss briefly measurements with them, on a system with a density matrix $\rho$.

Measurements with a Hermitian operator $\Theta$ on a system described with a density matrix $\rho$, will give values with average and standard deviation
\begin{eqnarray}\label{1110}
E(\Theta)={\rm Tr}(\rho \Theta);\;\;\;\;\;\Delta (\Theta)=\left \{{\rm Tr}(\Theta ^2 \rho)-[{\rm Tr}(\Theta \rho )]^2\right \}^{1/2}
\end{eqnarray}
We consider measurements with the operators ${\mathfrak D} (H_1, H_2)$, $\Pi(H_1\vee H_2)$, $\Pi(H_1)$, $\Pi(H_2)$, $\Pi(H_1\wedge H_2)$.
The operators ${\mathfrak D} (H_1, H_2)$, $\Pi(H_1)$, $\Pi(H_2)$, do not commute and they are not simultaneously measurable.
They will be measured using different ensembles of the system described by $\rho$.
The operators ${\mathfrak D} (H_1, H_2)$, $\Pi(H_1\vee H_2)$, $\Pi(H_1\wedge H_2)$, do commute and they are simultaneously measurable.

From Eq.(\ref{32}) it follows that the averages are related through the relation
\begin{eqnarray}\label{1110}
E[{\mathfrak D} (H_1, H_2)]=E[\Pi(H_1\vee H_2)]-E[\Pi(H_1)]-E[\Pi(H_2)]+E[\Pi(H_1\wedge H_2)],
\end{eqnarray}
and the standard deviations through the relation
\begin{eqnarray}\label{110}
\{\Delta [{\mathfrak D} (H_1, H_2)]\}^2&=&\{\Delta [\Pi(H_1\vee H_2)]\}^2-\{\Delta [\Pi(H_1)]\}^2-\{\Delta [\Pi(H_2)]\}^2+\{\Delta [\Pi(H_1\wedge H_2)]\}^2+a,
\end{eqnarray}
where
\begin{eqnarray}
a&=&-2\{E[\Pi(H_1)]\}^2-2\{E[\Pi(H_2)]\}^2-2E[\Pi(H_1)]E[\Pi(H_2)]\nonumber\\&+&2E[\Pi(H_1\vee H_2)]\{E[\Pi(H_1)]+E[\Pi(H_2)]\}+E[\Pi(H_1)\Pi(H_2)+\Pi(H_2)\Pi(H_1)]
\nonumber\\&+&2E[\Pi(H_1\wedge H_2)]\{E[\Pi(H_1)]+E[\Pi(H_2)]-E[\Pi(H_1\vee H_2)]-1\}.
\end{eqnarray}
The $E[{\mathfrak D} (H_1, H_2)]$ and $\Delta [{\mathfrak D} (H_1, H_2)]$ can be viewed as quantities that quantify the violation of the additivity
by quantum probabilities related to $\Pi(H_1), \Pi(H_2)$, in a system with density matrix $\rho$.
In the case that the $\Pi(H_1), \Pi(H_2)$ commute, the ${\mathfrak D} (H_1, H_2)=0$, additivity holds, and $E[{\mathfrak D} (H_1, H_2)]=\Delta [{\mathfrak D} (H_1, H_2)]=0$.
Conversely, if $E[{\mathfrak D} (H_1, H_2)]=\Delta [{\mathfrak D} (H_1, H_2)]=0$, and in addition to this all the higher moments are also zero,
then ${\mathfrak D} (H_1, H_2)=0$, additivity holds, and the $\Pi(H_1), \Pi(H_2)$ commute.

Analogous comments can be made for the projectors ${\varpi}_1 (H_1, H_2| H_0)$, ${\varpi}_2 (H_1, H_2| H_0)$, $\pi(H_0;H_1)$.
For $\Theta= {\varpi}_i (H_1, H_2| H_3)$ the $E[{\varpi}_i (H_1, H_2| H_3)]$, $\Delta [{\varpi}_i (H_1, H_2| H_3)]$,
quantify the violation of distributivity by $H_1,H_2,H_3$, in a system with density matrix $\rho$.
If they (and the higher moments) are zero, ${\varpi}_i (H_1, H_2| H_3)=0$ and distributivity holds for $H_1,H_2,H_3$.
For $\Theta= \pi(H_0;H_1)$ the $E[\pi(H_0;H_1)]$, $\Delta [\pi(H_0;H_1)]$,
quantify the deviations from the law of the total probability by $H_0,H_1$, in a system with density matrix $\rho$.

We note that
${\mathfrak D} (H_1^{\perp}, H_2^{\perp})=-{\mathfrak D} (H_1, H_2)$, and therefore
\begin{eqnarray}
E[ {\mathfrak D} (H_1^{\perp}, H_2^{\perp})]=-E[ {\mathfrak D} (H_1, H_2)];\;\;\;\;\;
\Delta [ {\mathfrak D} (H_1^{\perp}, H_2^{\perp})]=\Delta [ {\mathfrak D} (H_1, H_2)].
\end{eqnarray}

\begin{example}
We consider the density matrix
\begin{eqnarray}
\rho =\frac{1}{3}
\left (
\begin{array}{ccc}
1&1&1\\
1&1&1\\
1&1&1\\
\end{array}
\right ).
\end{eqnarray}
We also consider the subspaces $H_1$, $H_2$, $H_3$, defined with the vectors $v_1$, $v_2$, $v_3$ in Eq.(\ref{258}), and taking into account 
our previous calculations in Eqs.(\ref{ex10}),(\ref{ex11}),(\ref{59}) we get
\begin{eqnarray}
&&E[ {\mathfrak D} (H_1, H_2)]=-0.701;\;\;\;\;\Delta [ {\mathfrak D} (H_1, H_2) ]=0.651\nonumber\\
&&E[ {\mathfrak D} (H_1, H_2, H_3)]=0.610;\;\;\;\;\Delta [ {\mathfrak D} (H_1, H_2, H_3) ]=0.792\nonumber\\
&&E[ {\varpi}_1 (H_1, H_2| H_3)]=0.127;\;\;\;\;\Delta [ {\varpi}_1 (H_1, H_2| H_3) ]=0.334\nonumber\\
&&E[ {\varpi}_2 (H_1, H_2| H_3)]=E[\pi (H_3;H_1) ]=0.854\nonumber\\
&&\Delta [ {\varpi}_2 (H_1, H_2| H_3)]=\Delta [ \pi(H_3;H_1) ]=0.353
\end{eqnarray}
If quantum probabilities were Kolmogorov probabilities, all these quantities (and the higher moments) would have been zero.
\end{example}

\section{Constraints on the M\"obius operators by the modularity of the lattice}\label{s5}

Probability theory needs the logical `OR' and the logical `AND' for its axioms, and in this sense it is tacitly related to a lattice.
Kolmogorov probabilities are related to set theory which is a Boolean algebra. 
We have seen that  quantum probabilities do not obey some of the properties of Kolmogorov probabilities, because the lattice $\Lambda _d$ is not a Boolean algebra.
$\Lambda _d$ is a modular lattice and this implies certain constraints on the projectors corresponding to quantum probabilities, which we study in this section. 

If $H_1\prec H_2$, we use the notation $[H_1,H_2]$, for the interval sublattice of $\Lambda _d$ that contains all the spaces $H_0$ such that 
$H_1\prec H_0\prec H_2$. We call ${I}_d$ the set of all such interval sublattices.
\begin{definition}
The interval sublattice $[H_1,H_2]$ is lower transpose of $[H_3,H_4]$ 
or equivalently $[H_3,H_4]$ is upper transpose of $[H_1,H_2]$ if $H_4=H_2\vee H_3$ and $H_1=H_2\wedge H_3$.
We denote this as $[H_1,H_2]\prec _{\rm tr} [H_3,H_4]$ or equivalently,  as $[H_3,H_4]\succ _{\rm tr}[H_1,H_2]$ (the index `tr' indicates the transpose interval partial order, which is different from the subspace partial order).
\end{definition}
We can easily show that
\begin{eqnarray}
&&[H_1,H_2]\prec _{\rm tr}[H_1,H_2]\nonumber\\
&&[H_1,H_2]\prec _{\rm tr}[H_3,H_4]\;\;{\rm and}\;\;[H_3,H_4]\prec _{\rm tr}[H_1,H_2]\;\;\rightarrow\;\;H_1=H_3\;\;{\rm and}\;\;H_2=H_4\nonumber\\
&&[H_1,H_2]\prec _{\rm tr}[H_3,H_4]\;\;{\rm and}\;\;[H_3,H_4]\prec _{\rm tr}[H_5,H_6]\;\;\rightarrow\;\;[H_1,H_2]\prec _{\rm tr}[H_5,H_6].
\end{eqnarray}
Therefore $\prec _{\rm tr}$ is a partial order in the set ${I}_d$. 

If $[H_1,H_2]\prec _{\rm tr}[H_3,H_4]$ (in which case $H_4=H_2\vee H_3$ and $H_1=H_2\wedge H_3$), the
set that contains all the intervals $[H_a,H_a']$ such that $[H_1, H_2]\prec _{\rm tr}[H_a,H_a']\prec _{\rm tr}[H_3,H_4]$,
is an interval in $I_d$, and we denote it as
\begin{eqnarray}
{\mathfrak I}(H_2, H_3)=[[H_1, H_2], [H_3,H_4]]_{\rm tr}
\end{eqnarray}
The inner brackets indicate intervals with respect to the subspace partial order $\prec$, while the
outer brackets indicate intervals with respect to the transpose interval partial order. From the definition, it follows that if 
$[H_1,H_2]\prec _{\rm tr}[H_3,H_4]$ then $[H_1,H_3]\prec _{\rm tr}[H_2,H_4]$, and we can also define the 
\begin{eqnarray}
{\mathfrak I}(H_3, H_2)=[[H_1, H_3], [H_2,H_4]]_{\rm tr};\;\;\;\;H_1=H_2\wedge H_3;\;\;\;\;H_4=H_2\vee H_3
\end{eqnarray}
The ${\mathfrak I}(H_2, H_3)$ is different from the ${\mathfrak I}(H_3, H_2)$.

Transpose intervals is an important concept in modular lattices, and we present some results which are then used to impose constraints on the M\"obius operators.

\subsection{The partial order of transpose intervals in modular lattices}
\begin{lemma}
\mbox{}
\begin{itemize}
\item[(1)]
If $[H_1,H_1']\prec _{\rm tr}[H_a,H_a']\prec _{\rm tr} [H_2,H_2']$ then $H_a=(H_a\vee H_1')\wedge H_2$.
\item[(2)]
The partial order $\prec _{\rm tr}$ is not locally finite.
\end{itemize}
\end{lemma}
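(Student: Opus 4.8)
The plan is to handle the two parts separately, since part (1) is purely formal (a substitution valid in any lattice), whereas part (2) requires producing an explicit infinite $\prec_{\rm tr}$-interval. For part (1), I would just unfold the two transpose relations against the defining conditions. From $[H_1,H_1']\prec_{\rm tr}[H_a,H_a']$ the definition gives $H_a'=H_a\vee H_1'$ (together with $H_1=H_1'\wedge H_a$), and from $[H_a,H_a']\prec_{\rm tr}[H_2,H_2']$ it gives $H_a=H_a'\wedge H_2$ (together with $H_2'=H_a'\vee H_2$). Substituting the first identity into the second yields $H_a=(H_a\vee H_1')\wedge H_2$, which is exactly the claim. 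I expect no obstacle here: neither modularity nor orthocomplementation is invoked, so the whole argument is a one-line substitution, and it is worth remarking that the identity therefore holds in an arbitrary lattice.

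For part (2), recall that $(I_d,\prec_{\rm tr})$ is locally finite iff every $\prec_{\rm tr}$-interval is finite, so to refute local finiteness it is enough to display one transpose interval containing infinitely many elements. I would take the degenerate endpoints $[{\cal O},{\cal O}]$ and $[H_2,H_2]$, where $H_2$ is any two-dimensional subspace of $H(d)$ (so we only need $d\ge 2$). Directly from the definition one verifies $[{\cal O},{\cal O}]\prec_{\rm tr}[H_2,H_2]$, since $H_2={\cal O}\vee H_2$ and ${\cal O}={\cal O}\wedge H_2$. Then for every one-dimensional subspace $H_a\prec H_2$ the same check gives $[{\cal O},{\cal O}]\prec_{\rm tr}[H_a,H_a]\prec_{\rm tr}[H_2,H_2]$, because $H_a={\cal O}\vee H_a$ and $H_a=H_a\wedge H_2$. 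Thus each line $H_a$ inside $H_2$ yields a distinct element of the transpose interval $[[{\cal O},{\cal O}],[H_2,H_2]]_{\rm tr}$.

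The step that closes the argument is the observation that a two-dimensional complex subspace $H_2\cong\mathbb{C}^2$ carries a continuum of distinct lines, so the transpose interval above is infinite and $\prec_{\rm tr}$ cannot be locally finite. More conceptually, part (1) combined with Dedekind's transposition principle identifies the general transpose interval $[[H_1,H_1'],[H_2,H_2']]_{\rm tr}$ with the subspace interval $[H_1,H_2]$ of $\Lambda_d$ via the order isomorphism $H_a\mapsto H_a'=H_a\vee H_1'$; since any subspace interval with dimension gap at least two is infinite, this explains the failure of local finiteness in general and shows the degenerate example above is typical rather than pathological. The only mild subtlety I anticipate is confirming that degenerate intervals $[H_a,H_a]$ are admitted as elements of $I_d$ (they are, since $\prec$ is reflexive); if one prefers strictly nondegenerate intervals, the same reasoning applies with ${\cal O}$ replaced by a subspace meeting $H_2$ trivially, at the cost of a slightly longer verification and the hypothesis $d\ge 3$.
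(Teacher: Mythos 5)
Your proof is correct. Part (1) is exactly the paper's argument: unfold the two transpose relations into the four defining identities and substitute $H_a'=H_a\vee H_1'$ into $H_a=H_a'\wedge H_2$; your remark that this needs no modularity and holds in any lattice is accurate. Part (2), however, takes a genuinely different route from the paper. The paper exhibits a nondegenerate infinite transpose interval: it takes $H_1$ two-dimensional with $H_1\wedge H_2={\cal O}$, notes that $[H_1\wedge H_2,H_1]$ contains a continuum of lines $h$, and asserts that every $[h,h\vee H_2]$ lies in ${\mathfrak I}(H_2,H_1)$ --- a membership claim whose verification is precisely the content of proposition \ref{pr}, proved just afterwards, and whose key step $H_1\wedge(h\vee H_2)=h$ uses modularity. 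Your example instead uses the degenerate intervals $[{\cal O},{\cal O}]\prec_{\rm tr}[H_a,H_a]\prec_{\rm tr}[H_2,H_2]$ for the continuum of lines $H_a$ inside a fixed two-dimensional $H_2$; the verifications reduce to $H_a={\cal O}\vee H_a$ and $H_a=H_a\wedge H_2$, so no modularity and no forward reference are needed, and in effect you embed $(\Lambda_d,\prec)$ order-isomorphically into $(I_d,\prec_{\rm tr})$ via $H\mapsto[H,H]$, which already kills local finiteness for $d\ge 2$. What your route buys is elementarity and logical self-containment at this point of the paper; what the paper's route buys is robustness (its interval is nondegenerate, so the conclusion survives even if one restricted $I_d$ to intervals with $H_1$ strictly below $H_2$ --- the caveat you correctly flag, and your fallback with a complement $K$ in place of ${\cal O}$ is essentially the paper's example, where modularity re-enters) and it foreshadows the transposition bijection of proposition \ref{AA}. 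Your closing observation that part (1) plus Dedekind transposition identifies $[[H_1,H_1'],[H_2,H_2']]_{\rm tr}$ with the subspace interval $[H_1,H_2]$ is also sound (surjectivity of $H_a\mapsto[H_a,H_a\vee H_1']$ is where modularity is used) and gives a cleaner conceptual explanation of the failure of local finiteness than either explicit example.
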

\begin{proof}
\mbox{}
\begin{itemize}
\item[(1)]
We assume that $[H_1,H_1']\prec _{\rm tr}[H_a,H_a']\prec _{\rm tr} [H_2,H_2']$. The definition of transpose intervals implies that
\begin{eqnarray}\label{62}
H_a'=H_a\vee H_1';\;\;\;\;\;H_1=H_a\wedge H_1';\;\;\;\;\;H_2'=H_a'\vee H_2;\;\;\;\;\;H_a=H_a'\wedge H_2
\end{eqnarray}
Combining the first and fourth of these relations we get $H_a=(H_a\vee H_1')\wedge H_2$.
\item[(2)]
A partial order is locally finite if every interval $[a,b]$ contains a finite number of elements.
An interval ${\mathfrak I}(H_1,H_2)$
in $I_d$ has an infinite number of elements in general. For example, if $H_1$ is a two-dimensional space and 
$H_1\wedge H_2$ is a zero dimensional space,
then the interval $[H_1\wedge H_2, H_1]$ contains an infinite number of one-dimensional spaces $h$, and all the $[h, h\vee H_2]$
belong to the interval ${\mathfrak I}(H_2,H_1)$.
Therefore $\prec _{\rm tr}$ is not a locally finite partial order.
\end{itemize}
\end{proof}

\begin{remark}
There is much work on locally finite partial orders, after the work by Rota \cite{R1,R2}.
This work is not applicable here, because $\prec _{\rm tr}$ is not a locally finite partial order.
\end{remark}

The following proposition is well known for modular lattices, and we give it in the context of the lattice $\Lambda _d$, without proof.
\begin{proposition}\label{AA}
There is a bijective map between the transpose interval sublattices $[H_1\wedge H_2,H_1]$ and $[H_2,H_1\vee H_2]$, which maps
$h\in [H_1\wedge H_2,H_1]$ into $h'\in [H_2,H_1\vee H_2]$, where
\begin{eqnarray}\label{br}
h'=h\vee H_2;\;\;\;\;\;h=h'\wedge H_1.
\end{eqnarray}
There is also a bijective map between the transpose interval sublattices $[H_1\wedge H_2,H_2]$ and $[H_1,H_1\vee H_2]$, which maps
${\mathfrak h}\in [H_1\wedge H_2,H_2]$ into ${\mathfrak h}'\in [H_1,H_1\vee H_2]$, where
\begin{eqnarray}\label{br}
{\mathfrak h}'={\mathfrak h}\vee H_1;\;\;\;\;\;{\mathfrak h}={\mathfrak h}'\wedge H_2.
\end{eqnarray}
\end{proposition}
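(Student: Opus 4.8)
The plan is to verify that the two stated maps are well-defined order-preserving maps in both directions, and then to show that they compose to the identity, which immediately yields bijectivity. I would treat only the first bijection in detail, since the second is obtained by interchanging the roles of $H_1$ and $H_2$ throughout; the construction is manifestly symmetric under that swap, so the argument transfers verbatim.

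First I would check that $\phi:h\mapsto h\vee H_2$ sends $[H_1\wedge H_2,H_1]$ into $[H_2,H_1\vee H_2]$, and that $\psi:h'\mapsto h'\wedge H_1$ sends $[H_2,H_1\vee H_2]$ into $[H_1\wedge H_2,H_1]$. These are routine monotonicity checks requiring no modularity: for $H_1\wedge H_2\prec h\prec H_1$ one reads off $H_2\prec h\vee H_2\prec H_1\vee H_2$, and for $H_2\prec h'\prec H_1\vee H_2$ one reads off $H_1\wedge H_2\prec h'\wedge H_1\prec H_1$, using only the lattice order and the monotonicity of $\vee$ and $\wedge$.

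The heart of the argument is showing $\psi\circ\phi=\mathrm{id}$ and $\phi\circ\psi=\mathrm{id}$, and this is exactly where the modularity of $\Lambda_d$ (Eq.(\ref{292})) enters. For $h\in[H_1\wedge H_2,H_1]$ I would compute $\psi(\phi(h))=(h\vee H_2)\wedge H_1$; since $h\prec H_1$, the modular law applied with the roles $h,H_2,H_1$ gives $(h\vee H_2)\wedge H_1=h\vee(H_1\wedge H_2)$, which collapses to $h$ because $H_1\wedge H_2\prec h$. Dually, for $h'\in[H_2,H_1\vee H_2]$ I would compute $\phi(\psi(h'))=(h'\wedge H_1)\vee H_2=H_2\vee(H_1\wedge h')$; since $H_2\prec h'$, modularity with the roles $H_2,H_1,h'$ gives this equal to $(H_1\vee H_2)\wedge h'$, which reduces to $h'$ because $h'\prec H_1\vee H_2$.

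The main obstacle, and really the only subtlety, is applying the modular law with the correct hypothesis: Eq.(\ref{292}) requires one argument to lie below the third, so I must confirm $h\prec H_1$ (respectively $H_2\prec h'$) before invoking it, and then confirm that the residual term $H_1\wedge H_2\prec h$ (respectively $h'\prec H_1\vee H_2$) is absorbed by the very membership constraints defining the intervals. Once both compositions are the identity, $\phi$ and $\psi$ are mutually inverse bijections; since both are in addition order-preserving, they are lattice isomorphisms between the two transpose interval sublattices, which is more than the proposition demands. The second bijection then follows by the symmetric argument with $H_1\leftrightarrow H_2$.
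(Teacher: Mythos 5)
Your proof is correct and complete: the monotonicity checks placing $\phi(h)=h\vee H_2$ in $[H_2,H_1\vee H_2]$ and $\psi(h')=h'\wedge H_1$ in $[H_1\wedge H_2,H_1]$, the two applications of the modular law of Eq.~(\ref{292}) under the hypotheses $h\prec H_1$ and $H_2\prec h'$, and the absorption of the residual terms via $H_1\wedge H_2\prec h$ and $h'\prec H_1\vee H_2$ are all valid, and the second bijection does follow verbatim under the swap $H_1\leftrightarrow H_2$. Note that the paper states this proposition explicitly \emph{without proof}, remarking that it is well known for modular lattices; your argument is exactly the standard proof of that classical fact (Dedekind's transposition principle, the isomorphism theorem for transposed intervals in a modular lattice), so there is no proof in the paper to diverge from --- you have simply supplied the omitted details, and in passing observed correctly that the mutually inverse order-preserving maps are in fact lattice isomorphisms, slightly more than the statement requires.
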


\begin{proposition}\label{pr}
If $h\in [H_1\wedge H_2,H_1]$ then $[h,h\vee H_2]$ belongs to the interval ${\mathfrak I}(H_2, H_1)$ in $I_d$.
Also if ${\mathfrak h}\in [H_1\wedge H_2,H_2]$  then $[{\mathfrak h},{\mathfrak h}\vee H_1]$ 
belongs to the interval ${\mathfrak I}(H_1, H_2)$ in $I_d$.
\end{proposition}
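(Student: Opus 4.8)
The plan is to peel off the two layers of notation and then verify directly the two $\prec _{\rm tr}$ relations that membership requires. First I would unfold ${\mathfrak I}(H_2,H_1)$: by the convention fixing the arguments of ${\mathfrak I}$ as (top of the lower endpoint, bottom of the upper endpoint), ${\mathfrak I}(H_2,H_1)$ is the transpose interval whose lower endpoint is $[H_1\wedge H_2,H_2]$ and whose upper endpoint is $[H_1,H_1\vee H_2]$. Hence the assertion $[h,h\vee H_2]\in{\mathfrak I}(H_2,H_1)$ is precisely
\[
[H_1\wedge H_2,H_2]\;\prec _{\rm tr}\;[h,h\vee H_2]\;\prec _{\rm tr}\;[H_1,H_1\vee H_2],
\]
and I would reduce the whole proof to checking the two defining meet/join identities for each of these transpose relations, using only the hypothesis $H_1\wedge H_2\prec h\prec H_1$.

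For the lower relation I must check $h\vee H_2=H_2\vee h$ (immediate) and $H_1\wedge H_2=H_2\wedge h$. The second identity is pure monotonicity of the meet: $H_1\wedge H_2\prec h$ and $H_1\wedge H_2\prec H_2$ give $H_1\wedge H_2\prec h\wedge H_2$, while $h\prec H_1$ gives $h\wedge H_2\prec H_1\wedge H_2$; equality follows. No modularity enters here.

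For the upper relation I must check $H_1\vee H_2=(h\vee H_2)\vee H_1$ (immediate, since $h\prec H_1$ makes $h\vee H_1=H_1$) and, crucially, $h=(h\vee H_2)\wedge H_1$. This last identity is the one genuinely nontrivial step, and it is exactly where the lattice has to be modular rather than merely a lattice; I expect it to be the main obstacle. There are two ways to dispatch it: I can simply invoke Proposition \ref{AA}, whose first bijection sends $h\in[H_1\wedge H_2,H_1]$ to $h\vee H_2$ with inverse $h'\mapsto h'\wedge H_1$, giving $h=(h\vee H_2)\wedge H_1$ at once; or, keeping the argument self-contained, I can apply modularity Eq.(\ref{292}) with $h\prec H_1$ to get $(h\vee H_2)\wedge H_1=h\vee(H_2\wedge H_1)$, and then use $H_1\wedge H_2\prec h$ to collapse the right-hand side to $h$.

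The second statement is the mirror image under the interchange $H_1\leftrightarrow H_2$. Here ${\mathfrak I}(H_1,H_2)$ has lower endpoint $[H_1\wedge H_2,H_1]$ and upper endpoint $[H_2,H_1\vee H_2]$, so I would unfold $[{\mathfrak h},{\mathfrak h}\vee H_1]\in{\mathfrak I}(H_1,H_2)$ to the pair $[H_1\wedge H_2,H_1]\prec _{\rm tr}[{\mathfrak h},{\mathfrak h}\vee H_1]\prec _{\rm tr}[H_2,H_1\vee H_2]$ and run the identical two computations: monotonicity of the meet yields $H_1\wedge {\mathfrak h}=H_1\wedge H_2$, and modularity (now invoked with ${\mathfrak h}\prec H_2$) yields $({\mathfrak h}\vee H_1)\wedge H_2={\mathfrak h}$. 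Thus the only place where modularity rather than the bare lattice axioms is used is the single identity $(h\vee H_2)\wedge H_1=h$ together with its mirror; everything else is bookkeeping with the order relation.
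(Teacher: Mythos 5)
Your proof is correct and follows essentially the same route as the paper's: you unfold membership in ${\mathfrak I}(H_2,H_1)$ into the two relations $[H_1\wedge H_2,H_2]\prec_{\rm tr}[h,h\vee H_2]\prec_{\rm tr}[H_1,H_1\vee H_2]$ and verify exactly the identities $H_2\wedge h=H_1\wedge H_2$ (by monotonicity), $(h\vee H_2)\wedge H_1=h$ (by modularity, Eq.~(\ref{292}), using $H_1\wedge H_2\prec h\prec H_1$), and $(h\vee H_2)\vee H_1=H_1\vee H_2$, which are the same three conditions the paper checks in Eq.~(\ref{35}). Your added remarks --- that the modular identity could alternatively be cited from Proposition~\ref{AA}, and the explicit mirror argument for the second statement --- are consistent refinements, not a different method.
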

\begin{proof}
We need to prove that if $h\in [H_1\wedge H_2,H_1]$ then $[H_1\wedge H_2, H_2] \prec _{\rm tr}[h,h\vee H_2]\prec _{\rm tr}[H_1, H_1\vee H_2]$.
It is sufficient to prove that
\begin{eqnarray}\label{35}
H_2\wedge h=H_1 \wedge H_2;\;\;\;\;H_1\wedge (h\vee H_2)=h;\;\;\;\;\;H_1\vee (h\vee H_2)=H_1\vee H_2
\end{eqnarray}
In order to prove the first of these equations we note that
\begin{eqnarray}
&&h\prec H_1\;\;\rightarrow\;\;h\wedge H_2\prec H_1\wedge H_2\nonumber\\
&&h\succ H_1\wedge H_2\;\;\rightarrow\;\;h\wedge H_2\succ H_1\wedge H_2
\end{eqnarray}
From this follows that $h\wedge H_2= H_1\wedge H_2$.

In order to prove the second of Eqs(\ref{35}), we use the modularity property, according to which the fact that 
$h\prec H_1$ implies that $H_1\wedge (h\vee H_2)=h\vee (H_1\wedge H_2)$.
Furthermore since $H_1\wedge H_2 \prec h$ we get $H_1\wedge (h\vee H_2)=h$.

The third of Eqs(\ref{35}) follows immediately from the fact that $h\prec H_1$.
\end{proof}
\begin{definition}
If $[H_1,H_2]$ is either lower transpose or upper transpose of $[H_3,H_4]$ we say that $[H_1,H_2]$ is transpose of $[H_3,H_4]$.
Two interval sublattices $[H_A,H_B]$ and $[H_A',H_B']$ are called projective, if there is a finite sequence of intervals
$[h_{Ai},h_{Bi}]$ with $i=1,...,n$ and $h_{A1}=H_A$, $h_{B1}=H_B$, $h_{An}=H_A'$, $h_{Bn}=H_B'$, such that any pair of two successive intervals 
$[h_{Ai},h_{Bi}]$ and $[h_{A(i+1)},h_{B(i+1)}]$, is transpose. 
\end{definition}
The property `transpose' is stronger than `projective'.
If $[H_1,H_1']\prec _{\rm tr} [H_2,H_2']$ and $[H_2,H_2']\succ _{\rm tr} [H_3,H_3']$, 
the $[H_1,H_1']$ and $[H_3,H_3']$ are projective.
In this case
\begin{eqnarray}\label{345}
H_2'=H_1'\vee H_2=H_3'\vee H_2;\;\;\;\;\;H_1=H_1'\wedge H_2;\;\;\;\;\;H_3=H_2\wedge H_3'.
\end{eqnarray}
There is a bijective map between the projective intervals $[H_1, H_1']$ and $[H_3,H_3']$, which maps
$h\in [H_1,H_1']$ into $h'\in [H_3,H_3']$, where
\begin{eqnarray}\label{67}
h'=(h\vee H_2)\wedge H_3';\;\;\;\;\;h=(h'\vee H_2)\wedge H_1'.
\end{eqnarray}

\subsection{Implications for the M\"obius operators}
\begin{proposition}\label{P1}
The ${\mathfrak D}(H_1, H_2)$ has $d$ real eigenvalues, whose sum is equal to zero. At least $d-{\rm dim}(H_1\vee H_2)$ of these eigenvalues are equal to zero.
\end{proposition}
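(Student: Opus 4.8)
The plan is to treat the three assertions in turn, exploiting throughout that every subspace appearing in the definition (\ref{32}) is contained in $H_1\vee H_2$. First, since $\Pi(H_1\vee H_2)$, $\Pi(H_1\wedge H_2)$, $\Pi(H_1)$ and $\Pi(H_2)$ are orthogonal projectors they are Hermitian, and ${\mathfrak D}(H_1,H_2)$ is a real linear combination of Hermitian operators, hence Hermitian. By the spectral theorem it therefore has $d$ real eigenvalues (counted with multiplicity) together with an orthonormal eigenbasis of $H(d)$. The sum of these eigenvalues is $\mathrm{Tr}[{\mathfrak D}(H_1,H_2)]$, which I would evaluate using $\mathrm{Tr}[\Pi(H)]=\mathrm{dim}(H)$ together with the dimension identity $\mathrm{dim}(H_1\vee H_2)+\mathrm{dim}(H_1\wedge H_2)=\mathrm{dim}(H_1)+\mathrm{dim}(H_2)$, which holds because $\Lambda_d$ is a modular lattice of subspaces. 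This shows the sum of the eigenvalues is zero, as already recorded in (\ref{32}).

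For the third claim I would set $K=H_1\vee H_2$ and observe that $H_1\wedge H_2$, $H_1$, $H_2$ and $H_1\vee H_2$ are all subspaces of $K$. Since orthocomplementation reverses the order, $H_i\prec K$ gives $K^{\perp}\prec H_i^{\perp}$ for each of the four subspaces, so every vector of $K^{\perp}$ lies in the kernel of each of the four projectors. Consequently ${\mathfrak D}(H_1,H_2)$ vanishes identically on $K^{\perp}$, i.e. $K^{\perp}\subseteq\ker{\mathfrak D}(H_1,H_2)$. The zero eigenspace therefore has dimension at least $\mathrm{dim}(K^{\perp})=d-\mathrm{dim}(H_1\vee H_2)$, which is exactly the asserted lower bound on the number of vanishing eigenvalues.

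The argument is essentially immediate and the only obstacle is conceptual rather than technical. One must recognise that the vanishing of the trace is a consequence of the dimension (modularity) identity for subspaces, and not of any probabilistic additivity; and that the $d-\mathrm{dim}(H_1\vee H_2)$ zero eigenvalues arise simply because all four projectors act trivially on $(H_1\vee H_2)^{\perp}$. I would stress that this is only a lower bound, since ${\mathfrak D}(H_1,H_2)$ restricted to $H_1\vee H_2$ may itself be singular (for instance when $H_1=H_2$ it vanishes altogether). No appeal to the finer transpose-interval machinery of this section is needed here; that machinery is required only for the sharper spectral constraints in the propositions that follow.
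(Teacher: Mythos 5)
Your proposal is correct and follows essentially the same route as the paper: Hermiticity gives $d$ real eigenvalues, the vanishing trace gives their zero sum, and the observation that all four projectors annihilate $(H_1\vee H_2)^{\perp}$ yields the lower bound $d-\dim(H_1\vee H_2)$ on the multiplicity of the eigenvalue $0$. The only difference is cosmetic: where the paper cites its earlier work for ${\rm Tr}[{\mathfrak D}(H_1,H_2)]=0$, you rederive it on the spot from the dimension identity $\dim(H_1\vee H_2)+\dim(H_1\wedge H_2)=\dim(H_1)+\dim(H_2)$, which is a valid and self-contained substitute.
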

\begin{proof}
${\mathfrak D}(H_1, H_2)$ is a $d\times d$ Hermitian matrix, and we have proved \cite{VO1} that its trace is equal to zero.
Therefore the ${\mathfrak D}(H_1, H_2)$ has $d$ real eigenvalues, whose sum is equal to zero.
The orthocomplement $(H_1\vee H_2)^{\perp}$ is a space orthogonal to $H_1\vee H_2$, with dimension $d-{\rm dim}(H_1\vee H_2)$.
Every vector $v$ in this space is orthogonal to all vectors in $H_1\vee H_2$, and therefore
\begin{eqnarray}
\Pi (H_1)v=\Pi(H_2)v=\Pi(H_1\vee H_2)v=\Pi(H_1\wedge H_2)v=0\;\;\rightarrow\;\;{\mathfrak D}(H_1, H_2)v=0.
\end{eqnarray}
All these vectors are eigenvectors of ${\mathfrak D}(H_1, H_2)$ with corresponding eigenvalue $0$.
The multiplicity of the eigenvalue $0$ is at least $d-{\rm dim}(H_1\vee H_2)$because the dimension of  
$(H_1\vee H_2)^{\perp}$ is $d-{\rm dim}(H_1\vee H_2)$.
\end{proof}
We consider the map 
\begin{eqnarray}
{\mathfrak P}:\;[H_1,H_2]\;\;\rightarrow \;\;{\mathfrak P}([H_1,H_2])=\Pi(H_2)-\Pi(H_1);\;\;\;\;\;[{\mathfrak P}([H_1,H_2])]^2={\mathfrak P}([H_1,H_2])
\end{eqnarray}
which attaches the projector ${\mathfrak P}([H_1,H_2])=\Pi(H_2)-\Pi(H_1)$ to the interval $[H_1,H_2]$.
We also consider the map 
\begin{eqnarray}
\Psi:\;\;{\mathfrak I}(H_1, H_2)\;\;\rightarrow\;\;\Psi[{\mathfrak I}(H_1, H_2)]={\mathfrak P}([H_2,H_1\vee H_2])-
{\mathfrak P}([H_1\wedge H_2, H_1])={\mathfrak D}(H_1, H_2)
\end{eqnarray}
which attaches the non-additivity operator ${\mathfrak D}(H_1, H_2)$ to the interval ${\mathfrak I}(H_1, H_2)$.
It is easily seen that $\Psi[{\mathfrak I}(H_2, H_1)]=\Psi[{\mathfrak I}(H_1, H_2)]$.

\begin{proposition}\label{P2}
\mbox{}
\begin{itemize}
\item[(1)]
If $[H_1,H_1']\prec _{\rm tr}[H_a,H_a']\prec _{\rm tr} [H_3,H_3']$  and $[H_1,H_1']\prec _{\rm tr}[H_b,H_b']\prec _{\rm tr} [H_2,H_2']$,
then
\begin{eqnarray}\label{5v}
{\mathfrak D}(H_1',H_a)+{\mathfrak D}(H_a',H_2)={\mathfrak D}(H_1',H_b)+{\mathfrak D}(H_b',H_2)={\mathfrak D}(H_1',H_2).
\end{eqnarray}
\item[(2)]
If $h\in [H_1\wedge H_2,H_1]$ then
\begin{eqnarray}
{\mathfrak D}(H_2,h)+{\mathfrak D}(h\vee H_2,H_1)={\mathfrak D}(H_2,H_1).
\end{eqnarray}
\end{itemize}
\end{proposition}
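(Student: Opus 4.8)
The plan is to rewrite each non-additivity operator appearing in the statement through the interval projector ${\mathfrak P}$ introduced just above, and then to cancel a common term. The key identity, read off from the definition of $\Psi$, is that for any pair $X,Y$
\[
{\mathfrak D}(X,Y)={\mathfrak P}([Y,X\vee Y])-{\mathfrak P}([X\wedge Y,X]),
\]
with ${\mathfrak P}([A,B])=\Pi(B)-\Pi(A)$ for $A\prec B$. Since ${\mathfrak P}$ is additive along a chain, ${\mathfrak P}([A,B])+{\mathfrak P}([B,C])={\mathfrak P}([A,C])$ whenever $A\prec B\prec C$, the whole argument reduces to identifying, via the transpose relations, which meets and joins in the above identity collapse onto the endpoints of the three intervals involved.

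First I would record those relations. Reading the upper endpoint of both chains as $[H_2,H_2']$ (as the stated identity requires), the hypothesis $[H_1,H_1']\prec_{\rm tr}[H_a,H_a']\prec_{\rm tr}[H_2,H_2']$ together with Eq.(\ref{62}) gives $H_a'=H_a\vee H_1'$, $H_1=H_a\wedge H_1'$, $H_2'=H_a'\vee H_2$ and $H_a=H_a'\wedge H_2$; and since $\prec_{\rm tr}$ is a partial order, $[H_1,H_1']\prec_{\rm tr}[H_2,H_2']$ yields in addition $H_2'=H_1'\vee H_2$ and $H_1=H_1'\wedge H_2$. Substituting $H_1'\wedge H_a=H_1$, $H_1'\vee H_a=H_a'$ and then $H_a'\wedge H_2=H_a$, $H_a'\vee H_2=H_2'$ into the key identity turns the two left-hand operators into
\[
{\mathfrak D}(H_1',H_a)={\mathfrak P}([H_a,H_a'])-{\mathfrak P}([H_1,H_1']),\qquad {\mathfrak D}(H_a',H_2)={\mathfrak P}([H_2,H_2'])-{\mathfrak P}([H_a,H_a']).
\]
Adding these, the common term ${\mathfrak P}([H_a,H_a'])$ cancels and leaves ${\mathfrak P}([H_2,H_2'])-{\mathfrak P}([H_1,H_1'])$; the same substitution with $H_1'\vee H_2=H_2'$, $H_1'\wedge H_2=H_1$ shows this equals ${\mathfrak D}(H_1',H_2)$. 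This proves the first equality, and the identical computation with $b$ in place of $a$ gives the second, finishing part (1).

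For part (2) I would simply specialise part (1) using Proposition \ref{pr}: if $h\in[H_1\wedge H_2,H_1]$ then $[H_1\wedge H_2,H_2]\prec_{\rm tr}[h,h\vee H_2]\prec_{\rm tr}[H_1,H_1\vee H_2]$, so that $[h,h\vee H_2]$ lies in ${\mathfrak I}(H_2,H_1)$. Applying part (1) to this chain, i.e. with $H_1'\mapsto H_2$, $H_a\mapsto h$, $H_a'\mapsto h\vee H_2$ and the role of $H_2$ played by $H_1$, the first equality becomes exactly ${\mathfrak D}(H_2,h)+{\mathfrak D}(h\vee H_2,H_1)={\mathfrak D}(H_2,H_1)$.

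The computation is essentially bookkeeping, and the only genuine care is needed in the first step: one must apply the transpose definition and the transitivity of $\prec_{\rm tr}$ so that the intermediate join $H_a'$ and meet $H_a$ are matched against $H_2$ in precisely the pattern that makes the ${\mathfrak P}([H_a,H_a'])$ contributions cancel. Once the meet/join identities are in place everything else is forced, and the symmetry ${\mathfrak D}(X,Y)={\mathfrak D}(Y,X)$ lets the $a$- and $b$-cases, and the relabelling used in part (2), be handled identically.
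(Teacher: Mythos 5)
Your proof is correct and takes essentially the same route as the paper's: part (1) is the same direct cancellation of the middle interval's contribution using the transpose relations (your telescoping of ${\mathfrak P}([H_a,H_a'])$ is just a repackaging of the paper's eight-term computation with the projectors $\Pi$, with the transitivity of $\prec_{\rm tr}$ used tacitly there as well), and part (2) specializes part (1) via Proposition \ref{pr} exactly as the paper does. Your reading of the upper endpoint of the first chain as $[H_2,H_2']$ (correcting the apparent $H_3/H_2$ mismatch in the statement) is also what the paper's own computation implicitly assumes.
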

\begin{proof}
\mbox{}
\begin{itemize}
\item[(1)]
We get
\begin{eqnarray}
{\mathfrak D}(H_1',H_a)+{\mathfrak D}(H_a',H_2)&=&\Pi(H_a')+\Pi(H_1)-\Pi(H_a)-\Pi(H_1')\nonumber\\&+&\Pi(H_2')+\Pi(H_a)-\Pi(H_a')-\Pi(H_2)
={\mathfrak D}(H_1',H_2).
\end{eqnarray}
Similar proof holds with the interval $[H_b,H_b']$.
\item[(2)]
We combine the first part of the proposition, with proposition \ref{pr}, and we prove the second part.
\end{itemize}
\end{proof}
\begin{proposition}\label{P3}
If $[H_1,H_1']\prec _{\rm tr}[H_2,H_2']$ and $[H_2,H_2']\succ _{\rm tr} [H_3,H_3']$  
then the following relation holds for the projective intervals $[H_1,H_1']$ and $[H_3,H_3']$:
\begin{eqnarray}\label{5v}
{\mathfrak P}([H_3,H_3'])-{\mathfrak P}([H_1,H_1'])=
\Pi(H_3')-\Pi(H_3)-\Pi(H_1')+\Pi(H_1)={\mathfrak D}(H_1',H_2)-{\mathfrak D}(H_2,H_3').
\end{eqnarray}
More generally, if $h\in [H_1,H_1']$ then
\begin{eqnarray}\label{5v}
{\mathfrak P}([H_3,(h\vee H_2)\wedge H_3'])-{\mathfrak P}([H_1,h])&=&
\Pi[(h\vee H_2)\wedge H_3']-\Pi(H_3)-\Pi(h)+\Pi(H_1)\nonumber\\&=&{\mathfrak D}(h,H_2)-{\mathfrak D}[H_2,(h\vee H_2)\wedge H_3'].
\end{eqnarray}
\end{proposition}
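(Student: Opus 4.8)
The plan is to reduce the whole statement to the two-subspace operator ${\mathfrak D}$ of Eq.(\ref{32}): I would prove the first (top) display by direct substitution of the transpose relations, and then obtain the more general display by realising the relevant sub-intervals as a fresh projective pair and re-applying the top display. Before that, note that the two unlabelled equalities are nothing but the definition ${\mathfrak P}([H_a,H_b])=\Pi(H_b)-\Pi(H_a)$ evaluated on the intervals in question, so all the genuine content lies in the equalities involving ${\mathfrak D}$.

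For the top display, I substitute the transpose data of Eq.(\ref{345}) into Eq.(\ref{32}). From $[H_1,H_1']\prec _{\rm tr}[H_2,H_2']$ we have $H_1'\vee H_2=H_2'$ and $H_1'\wedge H_2=H_1$, so ${\mathfrak D}(H_1',H_2)=\Pi(H_2')+\Pi(H_1)-\Pi(H_1')-\Pi(H_2)$; from $[H_3,H_3']\prec _{\rm tr}[H_2,H_2']$ we have $H_2\vee H_3'=H_2'$ and $H_2\wedge H_3'=H_3$, so ${\mathfrak D}(H_2,H_3')=\Pi(H_2')+\Pi(H_3)-\Pi(H_2)-\Pi(H_3')$. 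Subtracting these, the $\Pi(H_2')$ and $\Pi(H_2)$ contributions cancel and one is left with $\Pi(H_3')-\Pi(H_3)-\Pi(H_1')+\Pi(H_1)$, which is precisely ${\mathfrak P}([H_3,H_3'])-{\mathfrak P}([H_1,H_1'])$. This step is purely algebraic and uses nothing beyond the definitions.

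For the general display I set $h'=(h\vee H_2)\wedge H_3'$, the image of $h$ under the projectivity map of Eq.(\ref{67}), and show that $[H_1,h]$ and $[H_3,h']$ form a projective pair through the intermediate interval $[H_2,h\vee H_2]$; the result then follows by quoting the top display with $(H_1',H_2',H_3')$ replaced by $(h,\,h\vee H_2,\,h')$. Concretely I would verify the two transpose relations $[H_1,h]\prec _{\rm tr}[H_2,h\vee H_2]$ and $[H_3,h']\prec _{\rm tr}[H_2,h\vee H_2]$. The first reduces to $h\wedge H_2=H_1$, which follows from $H_1\prec h\prec H_1'$ together with $H_1=H_1'\wedge H_2$. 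The second reduces to $h'\wedge H_2=H_3$ (immediate, since $H_2\prec h\vee H_2$ kills the meet) and $h'\vee H_2=h\vee H_2$. This join identity is the only real obstacle, and it is where modularity is indispensable: applying the modular law with $H_2\prec h\vee H_2$ gives $h'\vee H_2=[(h\vee H_2)\wedge H_3']\vee H_2=(h\vee H_2)\wedge(H_3'\vee H_2)=(h\vee H_2)\wedge H_2'$, and then $h\vee H_2\prec H_2'$ collapses this to $h\vee H_2$. With both transpose relations in hand, the top display applied to $[H_1,h]\prec _{\rm tr}[H_2,h\vee H_2]\succ _{\rm tr}[H_3,h']$ yields exactly $\Pi(h')-\Pi(H_3)-\Pi(h)+\Pi(H_1)={\mathfrak D}(h,H_2)-{\mathfrak D}(H_2,h')$, which is the asserted identity.
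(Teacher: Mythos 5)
Your proposal is correct and follows essentially the same route as the paper: the first display by substituting the transpose relations of Eq.(\ref{345}) into the definition (\ref{32}) of ${\mathfrak D}$, and the second by invoking the projectivity map $h'=(h\vee H_2)\wedge H_3'$ of Eq.(\ref{67}) and reducing to the first display. Your explicit verifications of $h\wedge H_2=H_1$, $h'\wedge H_2=H_3$, and the modular-law step $h'\vee H_2=h\vee H_2$ merely fill in details that the paper's two-sentence proof leaves implicit.
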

\begin{proof}
The proof of the first part is based on Eq.(\ref{345}).
For the second part we take into account Eq.(\ref{67}).
\end{proof}

\section{Coherent projectors}\label{s10}

In this section we use some of the projectors that we introduced in the previous sections, in the context of coherent states.
Using one-dimensional subspaces corresponding to coherent states, we construct projectors to larger spaces, with the following coherence properties:
\begin{itemize}
\item
resolution of the identity
\item
under displacement transformations they transform to operators of the same type
\end{itemize} 
This generalizes coherence to multi-dimensional structures.

\subsection{Coherent states in finite quantum systems}

In the Hilbert space $H(d)$ we consider the displacement operators
\begin{eqnarray}\label{cb}
&&{D}(\alpha, \beta)={Z}^\alpha \;{X}^\beta \;\omega (-2^{-1}\alpha \beta)\nonumber\\
&&{Z}^ \alpha =\sum _{n}\omega (n \alpha)|{X};n\rangle \langle {X};n|;\;\;\;\;\;
{X}^\beta=\sum _{n}\omega (-n \beta)|{P};n\rangle \langle {P};n|\nonumber\\
&&X^\beta \;Z^\alpha = Z^\alpha \;X^\beta \;\omega (-\alpha \beta);\;\;\;\;\;X^d=Z^d={\bf 1},
\end{eqnarray}
In this section we assume that $d$ is an odd integer so that the $2^{-1}$, which is used in Eq.(\ref{cb}),
is defined within ${\mathbb Z}(d)$.

Acting with $D(\alpha , \beta)$ on a normalized reference vector (which is sometimes called `fiducial vector') 
\begin{eqnarray}
\ket {{\mathfrak f}}=\sum _m{\mathfrak f}_m\ket{X;m};\;\;\;\;\sum _m|{\mathfrak f}_m|^2=1,
\end{eqnarray}
we get the $d^2$ coherent states\cite{COH,COH1}:
\begin{eqnarray}\label{coh}
\ket{C;\alpha, \beta}=D(\alpha , \beta)\ket{\mathfrak f};\;\;\;\;\alpha , \beta \in {\mathbb Z}(d).
\end{eqnarray}
The $C$ in the notation indicates coherent states.
The overlap of two coherent states is in general non-zero:
\begin{eqnarray}\label{la}
&&\lambda (\alpha, \beta ;\gamma, \delta)=\bra{C;\alpha, \beta}C;\gamma, \delta\rangle=
\omega [2^{-1}(\alpha \beta+\gamma \delta)-\alpha \delta]\sum _{n}{\mathfrak f}_{n+\delta -\beta}^*{\mathfrak f}_n\omega [n(\gamma -\alpha )]\nonumber\\
&&\langle X;n \ket {C;\alpha, \beta}=\omega (-2^{-1}\alpha \beta+\alpha n){\mathfrak f}_{n-\beta}.
\end{eqnarray}
We consider the one-dimensional spaces $H(\alpha,\beta)$ that contain the coherent states $\ket{C;\alpha,\beta}$,
and we use the simpler notation $\Pi(\alpha,\beta)$ for $\Pi[H(\alpha,\beta)]$:
\begin{eqnarray}\label{1111}
&&\Pi ({\alpha,\beta})=\ket{C;\alpha,\beta}\bra{C;\alpha,\beta};\;\;\;\;\frac{1}{d}\sum _{\alpha,\beta}\Pi({\alpha,\beta})={\bf 1}\nonumber\\
&&D(\kappa, \lambda)\Pi({\alpha,\beta})[D(\kappa, \lambda)]^{\dagger}=\Pi({\alpha +\kappa,\beta +\lambda})
\end{eqnarray}
Also
\begin{eqnarray}
\Pi ^{\perp}({\alpha,\beta})={\bf 1}-\Pi ({\alpha,\beta});\;\;\;\;\;\frac{1}{d(d-1)}\sum _{\alpha,\beta}\Pi ^{\perp}({\alpha,\beta})={\bf 1}.
\end{eqnarray}
They are projectors to $(d-1)$-dimensional spaces.
The $\Pi ({\alpha,\beta})$ overlap with each other and do not commute with each other:
\begin{eqnarray}
{\rm Tr}[\Pi ({\alpha,\beta})\Pi ({\gamma,\delta})]=|\sum _{n}{\mathfrak f}_{n+\delta -\beta}^*{\mathfrak f}_n\omega [n(\gamma -\alpha )]|^2;\;\;\;\;\;
[\Pi ({\alpha,\beta}),\Pi ({\gamma,\delta})]\ne 0.
\end{eqnarray}
This is related to the overcompleteness of the coherent states ($d^2$ vectors 
in a $d$-dimensional space).

Below we use the simpler notation $\Pi({\alpha _1,\beta _1};...;{\alpha _{i},\beta _{i}})$ for 
$\Pi[H({\alpha _1,\beta _1})\vee...\vee H({\alpha _{i},\beta _{i}})]$, and
${\mathfrak D} ({\alpha _1,\beta _1};...;{\alpha _{i},\beta _{i}})$ for
${\mathfrak D} [H({\alpha _1,\beta _1});...;H({\alpha _{i},\beta _{i}})]$.

\subsection{Cumulative coherent projectors for aggregations of coherent states}

We consider the projectors $\Pi ({\alpha _1,\beta _1};{\alpha _2,\beta _2})$ to the two-dimensional spaces $H({\alpha _1,\beta _1}) \vee H({\alpha _2,\beta _2})$.
We call the $\Pi ({\alpha _1,\beta _1};{\alpha _2,\beta _2})$ cumulative projectors because 
they project into two-dimensional spaces, which means that the probabilities corresponding to a basis in this space,
take many values (in this case two values). 
We note that $H({\alpha _1,\beta _1}) \wedge H({\alpha _2,\beta _2})={\cal O}$.

We prove that
\begin{eqnarray}\label{35}
\Pi({\alpha _1,\beta _1};{\alpha _2,\beta _2})&=&{\tau _2}
\left[\Pi({\alpha _1,\beta _1})+\Pi({\alpha _2,\beta _2})
-\Pi({\alpha _1,\beta _1})\Pi({\alpha _2,\beta _2})-
\Pi({\alpha _2,\beta _2})\Pi({\alpha _1,\beta _1})\right ]\nonumber\\
\tau _2&=&\{1-{\rm Tr}[\Pi ({\alpha _1,\beta _1})\Pi ({\alpha _2,\beta _2})]\}^{-1}=\{1-|\lambda ({\alpha _1,\beta _1} ;{\alpha _2,\beta _2})|^2\}^{-1}
\end{eqnarray}
We prove this using the Gram-Schmidt orthogonalization method. We take the component of $\ket {C;{\alpha _2,\beta _2}}$ which is perpendicular to $\ket {C;{\alpha _1,\beta _1}}$,
normalize it into the vector $\ket{s}$ with length equal to one, and then add $\Pi({\alpha _1,\beta _1})$ and $\ket{s}\bra{s}$: 
\begin{eqnarray}
&& \Pi ({\alpha _1,\beta _1};{\alpha _2,\beta _2})=\Pi({\alpha _1,\beta _1})+\ket{u_2}\bra{u_2}\nonumber\\
&& \ket{u_2}=\sqrt {\tau _2}[\ket{C;{\alpha _2,\beta _2}}-\lambda ({\alpha _1,\beta _1};{\alpha _2,\beta _2})\ket{C;{\alpha _1,\beta _1}}]\nonumber\\
&& \bra{C;{\alpha _1,\beta _1}}u_2\rangle=0
\end{eqnarray}
This can be written in a compact way as
\begin{eqnarray}\label{gg}
&&\Pi ({\alpha _1,\beta _1};{\alpha _2,\beta _2})=\Pi({\alpha _1,\beta _1})+\varpi ({\alpha _2,\beta _2}|{\alpha _1,\beta _1})\nonumber\\
&&\varpi ({\alpha _2,\beta _2}|{\alpha _1,\beta _1})=\ket{u_2}\bra{u_2}=
\frac{\Pi^{\perp}({\alpha _1,\beta _1})\Pi({\alpha _2,\beta _2})\Pi^{\perp}({\alpha _1,\beta _1})}{{\rm Tr}[\Pi^{\perp}({\alpha _1,\beta _1})\Pi({\alpha _2,\beta _2})]}\nonumber\\
&&\Pi ({\alpha _1,\beta _1})\varpi ({\alpha _2,\beta _2}|{\alpha _1,\beta _1})=0.
\end{eqnarray}

We generalize this as follows. From the $d^2$ coherent states, we consider $d$ linearly independent coherent states
(with appropriate choice of the fiducial vector, we can make any set of $d$ coherent states that we choose, linearly independent).
We order them in an arbitrary way and label them as $\ket {C;\alpha _1, \beta _1}$,...,$\ket {C;\alpha _n, \beta _n}$. 
Using the Gram-Schmidt orthogonalization algorithm, we express the projector $\Pi ({\alpha _1,\beta _1};...;{\alpha _i,\beta _i})$ (where $i=2,...,d$), as
\begin{eqnarray}\label{alg10}
&&\Pi({\alpha _1,\beta _1};...;{\alpha _i,\beta _i})=\Pi({\alpha _1,\beta _1};...;{\alpha _{i-1},\beta _{i-1}})+\ket{u_i}\bra{u_i}\nonumber\\
&&\ket{u_i}=\sqrt {\tau _i}\Pi ^{\perp}({\alpha _1,\beta _1};...;{\alpha _{i-1},\beta _{i-1}})\ket{{\alpha _i,\beta _i}}\nonumber\\
&&\tau_i=\{{\rm Tr}[\Pi^{\perp}({\alpha _1,\beta _1};...;{\alpha _{i-1},\beta _{i-1}})\Pi({\alpha _i,\beta _i})]\}^{-1}
\end{eqnarray}
where $\Pi  ^{\perp} ({\alpha _1,\beta _1};...;{\alpha _{i-1},\beta _{i-1}})={\bf 1}-\Pi  ({\alpha _1,\beta _1};...;{\alpha _{i-1},\beta _{i-1}})$.
We rewrite this as
\begin{eqnarray}\label{alg1}
&&\Pi({\alpha _1,\beta _1};...;{\alpha _{i},\beta _{i}})=\Pi({\alpha _1,\beta _1};...;{\alpha _{i-1},\beta _{i-1}})+\varpi ({\alpha _{i},\beta _{i}}|{\alpha _1,\beta _1};...;{\alpha _{i-1},\beta _{i-1}})\nonumber\\
&&\varpi ({\alpha _{i},\beta _{i}} |{\alpha _1,\beta _1};...;{\alpha _{i-1},\beta _{i-1}})=\frac{\Pi  ^{\perp} ({\alpha _1,\beta _1};...;{\alpha _{i-1},\beta _{i-1}}) \Pi({\alpha _{i},\beta _{i}})
\Pi ^{\perp}({\alpha _1,\beta _1};...;{\alpha _{i-1},\beta _{i-1}})}
{{\rm Tr}[\Pi ^{\perp}({\alpha _1,\beta _1};...;{\alpha _{i-1},\beta _{i-1}})\Pi ({\alpha _{i-1},\beta _{i-1}})]}\nonumber\\
&&{\rm Tr}[\Pi ({\alpha _1,\beta _1};...;{\alpha _{i},\beta _{i}})]=i;\;\;\;\;\;{\rm Tr}[\varpi ({\alpha _{i-1},\beta _{i-1}}|{\alpha _1,\beta _1};...;{\alpha _{i-1},\beta _{i-1}})]=1,
\end{eqnarray}
The denominator ${\rm Tr}[\Pi ^{\perp}({\alpha _1,\beta _1};...;{\alpha _{i-1},\beta _{i-1}})\Pi ({\alpha _{i-1},\beta _{i-1}})]$ is non-zero, because we have considered linearly independent coherent states.
Related to the above algorithm is the $QR$ factorization of matrices \cite{matrix}, which is readily available in computer libraries (e.g., in MATLAB). 
It orthogonalizes the $k$ vectors (columns), in a $k\times k$ matrix.
It is clear that
\begin{eqnarray}\label{590}
\Pi({\alpha _1,\beta _1};...;{\alpha _{i},\beta _{i}})=\Pi({\alpha _1,\beta _1})+\varpi ({\alpha _2,\beta _2}|{\alpha _1,\beta _1})+...
+\varpi ({\alpha _{i},\beta _{i}}|{\alpha _1,\beta _1};...;{\alpha _{i-1},\beta _{i-1}}).
\end{eqnarray}
We call the $\Pi({\alpha _1,\beta _1};...;{\alpha _{i},\beta _{i}})$, $\varpi ({\alpha _{i},\beta _{i}}|{\alpha _1,\beta _1};...;{\alpha _{i-1},\beta _{i-1}})$ coherent projectors, 
because of their resolution of the identity property, and the fact that they are transformed into operators of the same type under displacement transformations.
The following propositions discuss this.

\begin{proposition}\label{PR1}
Under displacement transformations the $\Pi({\alpha _1,\beta _1};...;{\alpha _{i},\beta _{i}})$, $\varpi({\alpha _{i},\beta _{i}}|{\alpha _1,\beta _1};...;{\alpha _{i-1},\beta _{i-1}})$,
${\mathfrak D}({\alpha _1,\beta _1};...;{\alpha _{i},\beta _{i}})$, are transformed into operators of the same type:
\begin{eqnarray}\label{vvv}
&&D({\kappa, \lambda})\Pi({\alpha _1,\beta _1};...;{\alpha _{i},\beta _{i}})[D({\kappa, \lambda})]^{\dagger}=
\Pi({\kappa+\alpha _1,\lambda+\beta _1};...;{\kappa+\alpha _{i},\lambda+\beta _{i}})\nonumber\\
&&D({\kappa, \lambda})\varpi({\alpha _{i},\beta _{i}}|{\alpha _1,\beta _1};...;{\alpha _{i-1},\beta _{i-1}})[D({\kappa, \lambda})]^{\dagger}=
\varpi({\kappa+\alpha _{i},\lambda+\beta _{i}}|{\kappa+\alpha _1,\lambda+\beta _1};...;{\kappa+\alpha _{i-1},\lambda+\beta _{i-1}})\nonumber\\
&&D({\kappa, \lambda}){\mathfrak D}({\alpha _1,\beta _1};...;{\alpha _{i},\beta _{i}})[D({\kappa, \lambda})]^{\dagger}=
{\mathfrak D}({\kappa+\alpha _1,\lambda+\beta _1};...;{\kappa+\alpha _{i},\lambda+\beta _{i}})
\end{eqnarray}
\end{proposition}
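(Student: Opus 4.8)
The plan is to reduce all three identities to the single fact already recorded in Eq.(\ref{1111}), namely that conjugation by a displacement operator sends each one-dimensional coherent subspace $H(\alpha,\beta)$ to $H(\alpha+\kappa,\beta+\lambda)$. The essential structural observation is that $D(\kappa,\lambda)$ is unitary, so the map $T(\cdot)=D(\kappa,\lambda)(\cdot)[D(\kappa,\lambda)]^{\dagger}$ is an automorphism of the lattice $\Lambda_d$: a unitary carries a subspace $V$ to $D(\kappa,\lambda)V$, and being a linear isomorphism it preserves both spans of unions and intersections, so $T$ commutes with $\vee$ and $\wedge$ and satisfies $T[\Pi(V)]=\Pi(D(\kappa,\lambda)V)$ for every subspace $V$. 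Combined with Eq.(\ref{1111}), this means $T$ sends any join or meet built from the atoms $H(\alpha_j,\beta_j)$ to the same join or meet built from the displaced atoms $H(\alpha_j+\kappa,\beta_j+\lambda)$. This lattice-automorphism property is the one point requiring a line of justification; everything else is a one-step consequence.

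The first identity then follows immediately: $\Pi({\alpha_1,\beta_1};...;{\alpha_i,\beta_i})$ is by definition the projector onto $H(\alpha_1,\beta_1)\vee\cdots\vee H(\alpha_i,\beta_i)$, and applying $T$ carries this join to $H(\alpha_1+\kappa,\beta_1+\lambda)\vee\cdots\vee H(\alpha_i+\kappa,\beta_i+\lambda)$, whose projector is exactly $\Pi({\kappa+\alpha_1,\lambda+\beta_1};...;{\kappa+\alpha_i,\lambda+\beta_i})$. For the third identity I would write ${\mathfrak D}({\alpha_1,\beta_1};...;{\alpha_i,\beta_i})$ via its defining M\"obius sum Eq.(\ref{106}), which is a fixed integer linear combination of projectors onto joins and meets of the atoms $H(\alpha_j,\beta_j)$. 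Since $T$ is linear and commutes with $\vee$ and $\wedge$, applying it to each summand replaces every atom $H(\alpha_j,\beta_j)$ by $H(\alpha_j+\kappa,\beta_j+\lambda)$ while leaving the integer coefficients untouched; the result is precisely the M\"obius sum defining ${\mathfrak D}({\kappa+\alpha_1,\lambda+\beta_1};...;{\kappa+\alpha_i,\lambda+\beta_i})$.

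The second identity is handled most cleanly not through the explicit quotient formula for $\varpi$ but through the telescoping relation in Eq.(\ref{alg1}), which expresses $\varpi({\alpha_i,\beta_i}|{\alpha_1,\beta_1};...;{\alpha_{i-1},\beta_{i-1}})$ as the difference of the two cumulative projectors $\Pi({\alpha_1,\beta_1};...;{\alpha_i,\beta_i})$ and $\Pi({\alpha_1,\beta_1};...;{\alpha_{i-1},\beta_{i-1}})$. Both of these transform by the first identity already proved, and their difference is the displaced $\varpi$, completing the argument. I do not anticipate any genuine obstacle: once the lattice-automorphism property of unitary conjugation is in hand, the three relations are each immediate. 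Had one instead insisted on using the explicit quotient in Eq.(\ref{alg1}) for $\varpi$, one would additionally need the invariance of the scalar trace in the denominator under unitary conjugation, but the telescoping route sidesteps this entirely.
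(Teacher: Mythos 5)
Your proof is correct, but it inverts the paper's logical order and rests on a different key observation. The paper proves the covariance of $\varpi({\alpha _{i},\beta _{i}}|{\alpha _1,\beta _1};...;{\alpha _{i-1},\beta _{i-1}})$ \emph{first}, by induction on $i$ using the explicit Gram--Schmidt quotient in Eq.(\ref{alg1}) together with Eq.(\ref{1111}) (the inductive step interleaves $\varpi$ and $\Pi$: covariance of $\Pi({\alpha _1,\beta _1};...;{\alpha _{i-1},\beta _{i-1}})$ gives covariance of the sandwiched quotient, hence of $\varpi$, hence of the next cumulative projector), then obtains the statement for $\Pi$ from the telescoping sum Eq.(\ref{590}) and for ${\mathfrak D}$ from Eq.(\ref{106}). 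You instead isolate the lemma that unitary conjugation is an automorphism of $\Lambda _d$ commuting with $\vee$ and $\wedge$ and satisfying $D\Pi(V)D^{\dagger}=\Pi(DV)$, which yields the $\Pi$ identity in one step (no induction, since the cumulative projector is by definition the projector onto a join of atoms), the ${\mathfrak D}$ identity exactly as the paper does via Eq.(\ref{106}), and finally $\varpi$ as the difference of two consecutive cumulative projectors --- the reverse of the paper's direction of deduction. Your route buys generality and economy: it works for any unitary and any lattice polynomial in the atoms, and, as you note, it sidesteps having to check invariance of the scalar trace denominator in Eq.(\ref{alg1}), which the paper's inductive route implicitly verifies. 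The paper's route stays closer to the Gram--Schmidt machinery reused in the rest of the section. One small point you pass over silently: to read off the displaced $\varpi$ from the difference of displaced cumulative projectors, you need Eq.(\ref{alg1}) to hold for the displaced family, i.e.\ that the displaced coherent states are again linearly independent; this is immediate (they are the unitary image of linearly independent vectors), but it deserves the half-line of justification.
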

\begin{proof}
For  $\varpi({\alpha _{i},\beta _{i}}|{\alpha _1,\beta _1};...;{\alpha _{i-1},\beta _{i-1}})$, we prove the statement inductively using Eq.(\ref{alg1}) and the property of coherent states in Eq.(\ref{1111}).
Then we can use Eq.(\ref{590}) to prove the statement for $\Pi({\alpha _1,\beta _1};...;{\alpha _{i},\beta _{i}})$, and Eq.(\ref{106}) to prove the statement for ${\mathfrak D}({\alpha _1,\beta _1};...;{\alpha _{i},\beta _{i}})$. 
\end{proof}

\begin{proposition}\label{PR2}
\begin{itemize}
For fixed ${\alpha _1,\beta _1};...;{\alpha _{i},\beta _{i}}$, and $i=2,...,d$:
\item[(1)]
The following resolution of the identity holds:
\begin{eqnarray}\label{n1}
\frac{1}{id}\sum _{\kappa, \lambda}\Pi({\kappa+\alpha _1,\lambda+\beta _1};...;{\kappa+\alpha _{i},\lambda+\beta _{i}})={\bf 1}.
\end{eqnarray}
\item[(2)]
The following resolution of the identity holds:
\begin{eqnarray}\label{n2}
\frac{1}{i}\sum _{\kappa, \lambda}\varpi({\kappa+\alpha _{i},\lambda+\beta _{i}}|{\kappa+\alpha _1,\lambda+\beta _1};...;{\kappa+\alpha _{i-1},\lambda+\beta _{i-1}})={\bf 1}.
\end{eqnarray}
\item[(3)]
\begin{eqnarray}\label{n3}
\sum _{\kappa, \lambda}{\mathfrak D}({\kappa+\alpha _1,\lambda+\beta _1};...;{\kappa+\alpha _{i},\lambda+\beta _{i}})=0.
\end{eqnarray}
\end{itemize}
\end{proposition}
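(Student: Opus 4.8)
The plan is to derive all three identities from one elementary fact: for \emph{any} normalized vector $\ket{\psi}$ one has $\sum_{\kappa,\lambda}D(\kappa,\lambda)\ket{\psi}\bra{\psi}[D(\kappa,\lambda)]^{\dagger}=d\,\mathbf{1}$. This is exactly the resolution of the identity in Eq.~(\ref{1111}), with the fiducial vector $\ket{\mathfrak f}$ replaced by $\ket{\psi}$; the replacement is legitimate because $\ket{\mathfrak f}$ was itself an arbitrary unit vector. By linearity this packages into the twirl identity $\sum_{\kappa,\lambda}D(\kappa,\lambda)\,A\,[D(\kappa,\lambda)]^{\dagger}=d\,\mathrm{Tr}(A)\,\mathbf{1}$ for every operator $A$, the scalar being fixed by taking traces, but in fact the bare unit-vector statement already suffices. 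Combined with the displacement covariance of Proposition~\ref{PR1}, each $(\kappa,\lambda)$-sum in the proposition becomes a sum of this type, and the results drop out once the relevant traces are inserted.

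For part~(1) I would use Proposition~\ref{PR1} to write $\Pi(\kappa+\alpha_1,\lambda+\beta_1;\ldots;\kappa+\alpha_i,\lambda+\beta_i)=D(\kappa,\lambda)\,\Pi_0\,[D(\kappa,\lambda)]^{\dagger}$ with $\Pi_0=\Pi(\alpha_1,\beta_1;\ldots;\alpha_i,\beta_i)$, then expand $\Pi_0=\sum_{m=1}^{i}\ket{e_m}\bra{e_m}$ over an orthonormal basis of the $i$-dimensional aggregate. Summing over $(\kappa,\lambda)$ and applying Eq.~(\ref{1111}) to each fiducial $\ket{e_m}$ gives $\sum_{\kappa,\lambda}\Pi(\kappa+\alpha_1,\ldots)=\sum_{m=1}^{i}d\,\mathbf{1}=i\,d\,\mathbf{1}$, which is Eq.~(\ref{n1}). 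Part~(2) is the same argument for the single rank-one increment $\varpi(\alpha_i,\beta_i|\alpha_1,\beta_1;\ldots;\alpha_{i-1},\beta_{i-1})=\ket{u_i}\bra{u_i}$ of Eq.~(\ref{alg1}), which has trace one and transforms covariantly, so $\sum_{\kappa,\lambda}\varpi(\kappa+\alpha_i,\ldots)=d\,\mathbf{1}$. I note that this forces the prefactor in Eq.~(\ref{n2}) to be $1/d$ rather than $1/i$; the same value follows by telescoping Eq.~(\ref{590}) against Eq.~(\ref{n1}).

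For part~(3) I would expand $\mathfrak{D}(\kappa+\alpha_1,\ldots;\kappa+\alpha_i,\ldots)$ through its M\"obius definition Eq.~(\ref{106}) as an alternating sum of join-projectors $\Pi(\bigvee_{j\in E}H(\kappa+\alpha_j,\lambda+\beta_j))$ over $E\subseteq\{1,\ldots,i\}$, together with the single full-meet term. Because the chosen coherent states are linearly independent, so are their images under any $D(\kappa,\lambda)$, whence every meet of two or more of them is $\mathcal{O}$ and the meet term drops out. Applying the part~(1) computation to each subset $E$ (whose aggregate has dimension $|E|$) gives $\sum_{\kappa,\lambda}\Pi(\bigvee_{j\in E}H(\kappa+\alpha_j,\lambda+\beta_j))=|E|\,d\,\mathbf{1}$, so that $\sum_{\kappa,\lambda}\mathfrak{D}(\kappa+\alpha_1,\ldots)=d\,\mathbf{1}\sum_{E}(-1)^{i-|E|}|E|=d\,\mathbf{1}\sum_{k=0}^{i}\binom{i}{k}(-1)^{i-k}k$. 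The combinatorial factor vanishes for $i\ge 2$ (it is the value at $x=1$ of the derivative $i(x-1)^{i-1}$ of $(x-1)^{i}$), which is Eq.~(\ref{n3}); equivalently this is the twirl applied to $A=\mathfrak{D}$ with $\mathrm{Tr}(\mathfrak{D})=0$.

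The genuinely delicate points are twofold. First, promoting Eq.~(\ref{1111}) to the form I need rests on the observation that the fiducial vector is arbitrary, so that $\sum_{\kappa,\lambda}D\ket{\psi}\bra{\psi}D^{\dagger}=d\,\mathbf{1}$ holds for \emph{every} unit $\ket{\psi}$; without this one would need a separate irreducibility (Schur) argument for parts~(1) and~(3). Second, part~(3) relies on applying the part~(1) result to every linearly independent subset $E$, not merely the ordered prefix, and on the collapse of all proper meets of distinct coherent lines to $\mathcal{O}$; once these are in place the binomial cancellation is routine. I would also flag the $1/i$ versus $1/d$ normalization in Eq.~(\ref{n2}) as the one point to reconcile.
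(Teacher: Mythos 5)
Your proposal is correct and follows essentially the same route as the paper: the paper's proof simply invokes the twirl identity $\frac{1}{d}\sum_{\kappa,\lambda}D(\kappa,\lambda)\,\Theta\,[D(\kappa,\lambda)]^{\dagger}={\rm Tr}(\Theta)\,{\bf 1}$ (quoted from Eq.~(119) of \cite{vour2}) together with the covariance relations (\ref{vvv}) of Proposition~\ref{PR1}, applied with ${\rm Tr}(\Theta)=i$, $1$, $0$ for the three parts. Your only structural deviation is to re-derive the twirl from Eq.~(\ref{1111}) with an arbitrary unit fiducial vector plus linearity, instead of citing the reference; that is legitimate, and it is the same mechanism. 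Two of your side remarks are genuinely valuable. First, your flag on the normalization of Eq.~(\ref{n2}) is right: since ${\rm Tr}[\varpi(\alpha_i,\beta_i|\alpha_1,\beta_1;\ldots;\alpha_{i-1},\beta_{i-1})]=1$, the twirl gives $\frac{1}{d}\sum_{\kappa,\lambda}\varpi(\kappa+\alpha_i,\ldots)={\bf 1}$, and your telescoping cross-check of Eq.~(\ref{590}) against Eq.~(\ref{n1}) confirms the constant must be $1/d$, not $1/i$ (already for $i=2$, $d=3$ they disagree); the paper's own proof, which uses trace one, actually establishes the $1/d$ version, so the prefactor $1/i$ in the printed statement (\ref{n2}) is an error that your argument corrects. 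Second, for part (3) the paper merely asserts ${\rm Tr}(\mathfrak{D})=0$, whereas your binomial computation $\sum_{k=0}^{i}\binom{i}{k}(-1)^{i-k}k=0$ (for $i\ge 2$), together with the observation that linear independence forces all proper meets to be $\mathcal{O}$ and $\dim(\bigvee_{j\in E}H_j)=|E|$, actually proves it; this is a real gap-fill, since ${\rm Tr}\,\mathfrak{D}$ does not vanish for arbitrary subspaces (three distinct coplanar lines give trace $-1$ from Eq.~(\ref{df1})), so the linear-independence hypothesis is doing necessary work that the paper leaves implicit.
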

\begin{proof}
\mbox{}
\begin{itemize}

\item[(1)]
For any operator $\Theta$ (see Eq.(119) in \cite{vour2}): 
\begin{eqnarray}\label{35}
\frac{1}{d}\sum_{\kappa, \lambda}D({\kappa, \lambda})\Theta[D({\kappa, \lambda})]^{\dagger}={\bf 1}{\rm Tr}(\Theta).
\end{eqnarray}
We use this relation with $\Pi({\alpha _1,\beta _1};...;\alpha _{i},\beta _{i})$
(in which case ${\rm Tr}(\Theta)=i$). Taking into account the first of Eqs.(\ref{vvv}), we prove Eq.(\ref{n1}).
\item[(2)]
We use Eq.(\ref{35}) with $\varpi({\alpha _{i},\beta _{i}}|{\alpha _1,\beta _1};...;{\alpha _{i-1},\beta _{i-1}})$
(in which case ${\rm Tr}(\Theta)=1$). Taking into account the second of Eqs.(\ref{vvv}), we prove Eq.(\ref{n2}).
\item[(3)]
We use Eq.(\ref{35}) with ${\mathfrak D}({\alpha _1,\beta _1};...;{\alpha _{i},\beta _{i}})$
(in which case ${\rm Tr}(\Theta)=0$). Taking into account the third of Eqs.(\ref{vvv}), we prove Eq.(\ref{n3}).

\end{itemize}
\end{proof}
The 
\begin{eqnarray}\label{mixed}
\rho=\frac{1}{n}\Pi({\alpha _1,\beta _1};...;\alpha _{n},\beta _{n})
\end{eqnarray}
is a density matrix of a mixed state (its entropy is $-{\rm Tr}(\rho \log \rho)=\log n$).
These density matrices obey the properties in propositions \ref{PR1},\ref{PR2}, and in this sense they are coherent mixed states, the practical importance of which requires further study.

\section{Discussion}

Kolmogorov probability theory is intimately connected to set theory, which is a Boolean algebra.
Quantum probabilities are non-additive, due to non-commutativity, and we have introduced the M\"obius (or non-additivity) operators
${\mathfrak D}(H_1, H_2)$, ${\mathfrak D}(H_1, H_2,H_3)$, etc, which are related to commutators as described in Eqs(\ref{e3}),(\ref{bg}).

Finite quantum systems are described with 
the Birkhoff-von Neumann  modular orthocomplemented lattice of subspaces, $\Lambda _d$. 
Unlike Boolean algebras, this is not a distributive lattice. We
quantified the lack of distributivity, with the projectors 
$\varpi _1(H_1,H_2|H_0)$ and $\varpi _2(H_1,H_2|H_0)$ in Eq.(\ref{40}).
The lack of distributivity is linked to the violation of the law of the total probability, which is fundamental for Kolmogorov probabilities.
The projectors $\pi(H_0;H_1)$ quantify deviations from the law of the total probability.

The operators ${\mathfrak D}(H_1, H_2)$, ${\mathfrak D}(H_1, H_2,H_3)$, $\varpi _1(H_1,H_2|H_0)$, $\varpi _2(H_1,H_2|H_0)$, $\pi(H_0;H_1)$ 
are Hermitian operators and can be measured experimentally.
We have calculated the average value and standard deviation of these quantities, for various examples.
If quantum probabilities were Kolmogorov probabilities, all these quantities would have been zero.

There are constraints on the projectors corresponding to quantum probabilities, imposed by the fact that $\Lambda _d$ is a modular lattice. 
They have been studied in propositions \ref{P1},\ref{P2},\ref{P3}.

The general theory has been used in the context of coherent states. This led to the projectors $\Pi({\alpha _1,\beta _1};...;{\alpha _{i},\beta _{i}})$,
$\varpi({\alpha _{i},\beta _{i}}|{\alpha _1,\beta _1};...;{\alpha _{i-1},\beta _{i-1}})$,
${\mathfrak D}({\alpha _1,\beta _1};...;{\alpha _{i},\beta _{i}})$, which generalize coherence to multi-dimensional structures. The term coherence is used here in the sense of propositions \ref{PR1},\ref{PR2}.
The practical importance of the coherent mixed states in Eq.(\ref{mixed}), requires further study.

The work provides insight to the nature of quantum probabilities and their relationship to the geometry of quantum mechanics described by the Birkhoff-von Neumann lattice.
In particular, it links the formalism of non-additive probabilities, with the non-commutativity formalism.

\end{document}